\newtheorem{observation}{Observation}
\newtheorem{Reduction Rule}{Reduction Rule}
\newcommand{\F}{\ensuremath{\mathbb F}}
\newcommand{\G}{\ensuremath{\mathbb G}}
\newcommand{\ICpartition}{IC-partition}
\newcommand{\ICpartitions}{IC-partitions}
\newcommand{\Oh}{\mathcal{O}}
\newcommand{\abpartization}{{\sc Vertex $(r,\ell)$-Partization}}
\newcommand{\twoonepartization}{{\sc Vertex $(2,1)$-Partization}}
\newcommand{\onetwopartization}{{\sc Vertex $(1,2)$-Partization}}
\newcommand{\twotwopartization}{{\sc Vertex $(2,2)$-Partization}}
\newcommand{\abedgepartization}{{\sc Edge $(r,\ell)$-Partization}}
\newcommand{\twooneedgepartization}{{\sc Edge $(2,1)$-Partization}}
\newcommand{\onetwoedgepartization}{{\sc Edge $(1,2)$-Partization}}
\newcommand{\twotwoDeletion}{$(2,2)$-{vertex deletion set}}
\newcommand{\abGraph}{$(r,\ell)$-{graph}}
\newcommand{\onetwoGraph}{$(1,2)$-{graph}}
\newcommand{\twooneGraph}{$(2,1)$-{graph}}
\newcommand{\twotwoGraph}{$(2,2)$-{graph}}
\newcommand{\abPartition}{$(r,\ell)$-{partition}}
\newcommand{\twoonePartition}{$(2,1)$-{partition}}
\newcommand{\twotwoPartition}{$(2,2)$-{partition}}
\newcommand{\absplitGraph}{$(r,\ell)$-{split graph}}
\newcommand{\twotwosplitGraph}{$(2,2)$-{split graph}}
\newcommand{\absplitPartition}{$(r,\ell)$-{split partition}}
\newcommand{\twotwosplitPartition}{$(2,2)$-{split partition}}
\newcommand{\YES}{\textsc{YES}}
\newcommand{\NO}{\textsc{NO}}
\newcommand{\OCT}{\textsc{Odd Cycle Transversal}}
\newcommand{\OCET}{\textsc{Edge Odd Cycle Transversal}}
\newcommand{\TOCT}{\textsc{TOCT}}
\newcommand{\defparproblem}[4]{
  \vspace{1mm}
\noindent\fbox{
  \begin{minipage}{0.96\textwidth}
  \begin{tabular*}{\textwidth}{@{\extracolsep{\fill}}lr} #1  & {\bf{Parameter:}} #3
\\ \end{tabular*}
  {\bf{Input:}} #2  \\
  {\bf{Question:}} #4
  \end{minipage}
  }
  \vspace{1mm}
}
\newcommand{\defparproblemoutput}[4]{
  \vspace{1mm}
\noindent\fbox{
  \begin{minipage}{0.96\textwidth}
  \begin{tabular*}{\textwidth}{@{\extracolsep{\fill}}lr} #1  & {\bf{Parameter:}} #3
\\ \end{tabular*}
  {\bf{Input:}} #2  \\
  {\bf{Output:}} #4
  \end{minipage}
  }
  \vspace{1mm}
}
\newcommand{\defproblem}[3]{
  \vspace{1mm}
\noindent\fbox{
  \begin{minipage}{0.96\textwidth}
  \begin{tabular*}{\textwidth}{@{\extracolsep{\fill}}lr} #1 \\ \end{tabular*}
  {\bf{Input:}} #2  \\
  {\bf{Question:}} #3
  \end{minipage}
  }
  \vspace{1mm}
}
\newcommand{\NP}{\text{\normalfont  NP}}
\newcommand{\FPT}{\text{\normalfont FPT}}
\title{Parameterized Algorithms for Deletion to $(r,\ell)$-graphs}
\titlerunning{Parameterized Algorithms for Deletion to $(r,\ell)$-graphs}
\author{Sudeshna Kolay \inst{1} \and Fahad Panolan \inst{1} }
\institute{Institute of Mathematical Sciences, Chennai, India.  
\email{\{skolay|fahad\}@imsc.res.in}
}
\begin{document}
\maketitle

\begin{abstract}
For  fixed integers $r,\ell \geq 0$, a graph $G$ is called an {\em $(r,\ell)$-graph} if the vertex set $V(G)$ can be partitioned into $r$ independent sets and $\ell$ cliques. 
This brings us to the following natural parameterized questions: {\sc Vertex $(r,\ell)$-Partization} and {\sc Edge $(r,\ell)$-Partization}. An input to these problems consist of  a graph $G$ and a positive integer $k$  
and the objective is to decide whether there exists a set $S\subseteq V(G)$ ($S\subseteq E(G)$) such that the deletion of $S$ from $G$ results in an $(r,\ell)$-graph.  These problems generalize well studied problems such as 
{\sc Odd Cycle Transversal}, {\sc Edge Odd Cycle Transversal}, {\sc Split Vertex Deletion} and {\sc Split Edge Deletion}. We do not hope to get parameterized algorithms for either {\sc Vertex $(r,\ell)$-Partization} or {\sc Edge $(r,\ell)$-Partization} when either of $r$ or $\ell$ is at least $3$ as the recognition problem itself is NP-complete. This leaves the case of $r,\ell \in  \{1,2\}$.  We almost complete the parameterized complexity dichotomy for these problems by obtaining the following results: 

\begin{enumerate}
\item We show that {\sc Vertex $(r,\ell)$-Partization} is fixed parameter tractable (\FPT{}) for  
$r,\ell \in  \{1,2\}$.  
Then we design an 
 $\Oh(\sqrt{ \log{n} })$-factor approximation algorithms for these problems. These approximation algorithms are then utilized to design polynomial sized randomized Turing kernels for these problems. 

\item   {\sc Edge $(r,\ell)$-Partization} is \FPT{} when  
$(r,\ell)\in\{(1,2),(2,1)\}$. However, the parameterized complexity of   {\sc Edge $(2,2)$-Partization} remains open. 
\end{enumerate}
For our approximation algorithms and thus for Turing kernels we use 
an interesting finite forbidden induced graph characterization, for a class of graphs known as $(r,\ell)$-split graphs, properly containing the class of  $(r,\ell)$-graphs. This approach to obtain approximation algorithms could be of an independent interest. 
\end{abstract}

\section{Introduction}

For  fixed integers $r,\ell \geq 0$, a graph $G$ is called an {\em \abGraph} if the vertex set $V(G)$ can be partitioned into $r$ independent sets and $\ell$ cliques. Although the problem has an abstract setting, some special cases are well known graph classes and have been widely studied. For example, $(2,0)$- and  $(1,1)$-graphs correspond to bipartite graphs and split graphs respectively.  
A $(3,0)$-graph is a $3$-colourable graph. Already, we get a hint of an interesting dichotomy for this graph class, even with respect to recognition algorithms. Throughout the paper we will use $m$ and $n$ to denote the number of edges and the number of vertices, respectively, in the input graph $G$.  It is well known that we can recognize $(2,0)$- and  $(1,1)$-graphs  in $\Oh(m+n)$ time. In fact, one can show that recognizing whether a graph $G$ is an \abGraph, when $r,\ell \leq 2$, can be done in polynomial time \cite{Brandstadt98thecomplexity,Feder03listpartitions}. On the other hand, when either $r \geq 3$ or $\ell \geq 3$, the recognition problem is as hard as the celebrated $3$-colouring problem, which is \NP-complete~\cite{Garey:1979:CIG:578533}. These problems are also studied when the input is restricted to be a chordal graph, in which case we can get polynomial time recognition algorithms for every $r$ and $\ell$~\cite{DBLP:journals/endm/FederHR11}.

The topic of this paper is to design recognition algorithms for  {\em almost}  \abGraph{s} in the realm of parameterized algorithms. In particular, we study the following natural parameterized questions on \abGraph{s}: \abpartization\ and \abedgepartization. 

\smallskip

 \defparproblem{{\abpartization}}{A Graph $G$ and a positive integer $k$}{$k$}{Is there a vertex subset $S\subseteq V(G)$ of size at most $k$ such that  
 $G-S$ is an \abGraph?} 

  \defparproblem{{\abedgepartization}}{A Graph $G$ and a positive integer $k$}{$k$}{Is there an edge subset $F\subseteq E(G)$ of size at most $k$ such that $G-F$ is an \abGraph?}

These problems generalize some of the most well studied problems in parameterized complexity, such as 
{\sc Vertex Cover}, 
{\sc Odd Cycle Transversal (OCT)}, {\sc Edge Odd Cycle Transversal (EOCT)}, {\sc Split Vertex Deletion (SVD)} and {\sc Split Edge Deletion (SED)}.  {\sc Vertex Cover}, in particular, has been extensively studied in the parameterized complexity, and the current fastest algorithm runs in time $1.2738^kn^{\Oh(1)}$ and has a kernel with $2k$ vertices~\cite{DBLP:conf/mfcs/ChenKX06}. 
The parameterized complexity of  {\sc OCT} was a well known open problem for a long time. In 2003,  in a breakthrough paper,  Reed et al.~\cite{ReedSV04}  showed that {\sc OCT}  is {\FPT} by developing an algorithm for the problem running in time $\Oh(3^kmn)$.  In fact, this was the first time that the iterative compression technique was used. However, the algorithm for {\sc OCT} had seen no further improvements in the last $9$ years, though several reinterpretations of the algorithm have been published~\cite{Huffner09,LokshtanovSS09}. Only recently,  Lokshtanov et al.~\cite{LokshtanovNRRS14} obtained a faster algorithm for the problem running in time $2.3146^kn^{\Oh(1)}$ using a branching algorithm based on linear programming.   Guo et al.~\cite{GuoGHNW06} designed an algorithm for {\sc EOCT} running in time $2^kn^{\Oh(1)}$.
There is another theme of research in parameterized complexity, where the objective is to minimize the dependence of $n$ at the cost of a slow growing 
function of $k$.  A well known open problem, in the area, is whether {\sc OCT} admits a linear time parameterized algorithms. Only recently, the first linear time FPT algorithms for {\sc OCT} on general graphs were obtained, both of which run in time $\Oh(4^kk^{\Oh(1)}(m+n))$~\cite{RamanujanS14,IwataOY14}.  Kratsch and Wahlstr\"{o}m~\cite{KratschW14} obtained a randomized polynomial kernel for  {\sc OCT} and {\sc EOCT}. Ghosh et al.~\cite{DBLP:conf/swat/GhoshKKMPRR12} studied {\sc SVD} and {\sc SED} and designed algorithms with running time 
$2^kn^{\Oh(1)}$ and $2^{\Oh(\sqrt{k} \log k)}n^{\Oh(1)}$. They also gave the best known polynomial kernel for these problems. Later, Cygan and  Pilipczuk~\cite{DBLP:journals/ipl/CyganP13} designed an algorithm for {\sc SVD} running in time $1.2738^{k+o(k)}n^{\Oh(1)}$. Krithika and Narayanaswamy~\cite{DBLP:journals/jgaa/KrithikaN13} studied {\abpartization} problems on perfect graphs, and among several results they obtain $(r+1)^k n^{\Oh(1)}$ algorithm for {\sc Vertex $(r,0)$-Partization}  on perfect graphs.

\medskip

\noindent 
{\bf Our Results and Methods.}  We do not hope to get parameterized algorithms for either {\sc Vertex $(r,\ell)$-Partization} or {\sc Edge $(r,\ell)$-Partization} when either of $r$ or $\ell$ is at least $3$ as the recognition problem itself is \NP-complete. This leaves the case of $r,\ell \in  \{0,1,2\}$.  
We almost complete the parameterized complexity dichotomy for these problems by either obtaining new results or using the existing results. We refer to Figures~\ref{fig:vertexresults} and \ref{fig:edgeresults} for a summary of new and old results. 
%

\begin{figure}[t]
\centering
\begin{tabular}{|c|c|c|c|}
\hline
$r,\ell$                                                                                                & Problem Name                                                                                                                                                                      & FPT                                        & Kernel                                                                                            \\ \hline
$(1,0)$                                                                                              & {\sc Vertex Cover}                                                                                                                                                              & $1.2738^k$                                   & Poly                                                                                              \\ \hline
$(0,1)$                                                                                              & {\sc Vertex Cover} on $\overline{G}$                                                                                                                                            & $1.2738^k$                                 & Poly                                                                                              \\ \hline
$(1,1)$                                                                                              & {\sc SVD}                                                                                                                                                                       & $1.2738^{k+o(k)}$                          & Poly                                                                                              \\ \hline
$(2,0)$                                                                                              & {\sc OCT}                                                                                                                                                                       & $2.3146^k$                                & Randomized Poly                                                                                   \\ \hline
$(0,2)$                                                                                              & {\sc OCT on $\overline{G}$}                                                                                                                                                     & $2.3146^k$                                & Randomized Poly                                                                                   \\ \hline
{\color[HTML]{036400} \textbf{\begin{tabular}[c]{@{}c@{}}$(2,1)$, $(1,2)$, \\ $(2,2)$\end{tabular}}} & {\color[HTML]{036400} \textbf{\begin{tabular}[c]{@{}c@{}}\sc{Vertex $(2,1)$-partization}\\ \sc{Vertex $(1,2)$-partization}\\ \sc{Vertex $(2,2)$-partization}\end{tabular}}} & {\color[HTML]{036400} \textbf{$3.3146^k$}} & {\color[HTML]{036400} \textbf{\begin{tabular}[c]{@{}c@{}}Randomized \\ Turing Poly\end{tabular}}} \\ \hline
\end{tabular}

\caption{\label{fig:vertexresults}Summary of known and new results for the family of \abpartization{} problems. New results are highlighted in green (last row).}
\end{figure}

\begin{figure}[t]
\centering
\begin{tabular}{|c|c|c|c|}
\hline
$r,\ell$                                   & Problem Name                                                    & FPT                                          & Kernel                               \\ \hline
$(1,0)$                                 & \multicolumn{3}{c|}{\textit{~~~~~~~~~~~~~~~~~~~~~~~Recognizable in polynomial time.~~~~~~~~~~~~~~~~~~~~~~~}}                                                                                        \\ \hline
$(0,1)$                                 & \multicolumn{3}{c|}{\textit{Recognizable in polynomial time.}}                                                                                        \\ \hline
$(1,1)$                                 & {\sc SED}                                                     & $2^{\Oh(\sqrt{k} \log k)}$                             & Poly                                 \\ \hline
$(2,0)$                                 & {\sc EOCT}                                                    & $2^k$                                        & Randomized Poly                      \\ \hline
$(0,2)$                                 & \multicolumn{3}{c|}{\textit{Recognizable in polynomial time.}}                                                                                        \\ \hline
{\color[HTML]{036400} \textbf{$(2,1)$}} & {\color[HTML]{036400} \textbf{\sc{Edge $(2,1)$-partization}}} & {\color[HTML]{036400} \textbf{$2^{k+o(k)}$}} & {\color[HTML]{CE6301} \textbf{Open}} \\ \hline
{\color[HTML]{036400} \textbf{$(1,2)$}} & {\color[HTML]{036400} \textbf{\sc{Edge $(1,2)$-partization}}} & {\color[HTML]{036400} \textbf{FPT}}          & {\color[HTML]{CE6301} \textbf{Open}} \\ \hline
{\color[HTML]{036400} \textbf{$(2,2)$}} & {\color[HTML]{036400} \textbf{\sc{Edge $(2,2)$-partization}}} & \multicolumn{2}{c|}{{\color[HTML]{CE6301} \textbf{Open}}}                           \\ \hline
\end{tabular}
\caption{\label{fig:edgeresults}Summary of known and new results for the family of \abedgepartization{} problems. New results are highlighted in green.}
\end{figure}

%
%

For both {\sc Vertex $(r,\ell)$-Partization} and {\sc Edge $(r,\ell)$-Partization}, the  only new cases  for which we need to design new parameterized algorithms to complete the dichotomy is when $r,\ell \in  \{1,2\}$. Apart from the algorithmic results indicated in the Figures~\ref{fig:vertexresults} and \ref{fig:edgeresults}, we also obtain the following results. When $r,\ell \in \{1,2\}$,  we obtain an $\Oh(\sqrt{ \log{n}})$-approximation for these special cases. Finally, we obtain randomized  {\em Turing kernels} for \abpartization{} using this approximation algorithms. In particular, we give a polynomial time algorithm that produces polynomially many instances, $n^{\Oh(1)}$ of \abpartization{} of size $k^{\Oh(1)}$ such that with very high probability $(G,k)$ is a \YES\ instance of \abpartization{} if and only of one of the polynomially many instances of \abpartization{} of size $k^{\Oh(1)}$ is a \YES\ instance. 
 The question of existence of polynomial kernels for these special cases as well as for  {\sc Edge $(r,\ell)$-Partization} is open. Even the parameterized complexity of  {\sc Edge $(2,2)$-Partization} remains open.

%
\smallskip

\noindent
{\em Our methods.}
Most of the \FPT{} algorithms are based on the iterative compression technique and  use an algorithm for either {\sc OCT} or {\sc EOCT} as a subroutine.  One of the algorithms also uses methods developed in~\cite{MarxOR13}. 
To arrive at the approximation algorithm, we needed to take a detour. We start by looking at a slightly larger class of graphs called {\em $(r,\ell)$-split graphs}. A graph $G$ is an \absplitGraph{} if its vertex set can be partitioned into $V_1$ and $V_2$ such that the size of a largest clique in $G[V_1]$ is bounded by $r$ and the size of the largest independent set in $G[V_2]$ is bounded by $\ell$. Such a bipartition for the graph $G$ is called as \absplitPartition. The notion of \absplitGraph{s} was introduced in~\cite{Gyarfas98}. 
For any fixed $r$ and $\ell$, there is a finite forbidden set $\F_{r,\ell}$ for \absplitGraph{s}~\cite{Gyarfas98}. That is,  
a graph $G$ is a \absplitGraph{} if and only if $G$ does not contain any graph $H\in \F_{r,\ell}$ as an induced subgraph. The size of the largest forbidden graph is bounded by $f(r,\ell)$, $f$ being a function given in \cite{Gyarfas98}.
Since the class \abGraph{s} is a sub class of \absplitGraph{s}, each graph in $\F_{r,\ell}$ will not appear as an induced  subgraph in any \abGraph.  For our approximation algorithm we first make the given graph \absplitGraph\ by removing the induced subgraphs that are isomorphic to some graph in  $\F_{r,\ell}$. 
Once we have \absplitGraph, we generate a \absplitPartition\ $(V_1,(V_2)$ of $G$. Then we observe that for $r,\ell \in \{1,2\}$ the problem reduces to finding an approximate solution to  \OCT{} in $G[V_1]$ and $\overline{G}[V_2]$. Finally, we use the known $\Oh(\sqrt{\log{n}})$-approximation algorithm for \OCT{} \cite{AgarwalCMM05} to obtain a $\Oh(\sqrt{\log{n}})$-approximation algorithm for our problems. The Turing kernel for \abpartization, when $r,\ell \in \{1,2\}$, uses the approximation algorithm and depends on the randomized kernelization algorithm for \OCT{} \cite{KratschW14}.

\section{Preliminaries}

  We use standard notations from graph theory(\cite{diestel}) throughout this paper. 
The vertex set and edge set of a graph are  denoted as $V(G)$ and $E(G)$ respectively. The complement of the graph $G$, denoted by $\overline{G}$, is such that $\overline{G} = (V(G), E(C_{\vert V \vert})- E(G))$, where $C_n$ denotes a clique on $n$ vertices. The neighbourhood of a vertex $v$ is represented as $N_G(v)$, or, when the context of the graph is clear, simply as $N(v)$. An induced subgraph of $G$ on the vertex set $V'\subseteq V$ is written as $G[V']$. An induced subgraph of $G$ on the edge set $E' \subseteq E$ is written as $G[E']$. For a vertex subset $V' \subseteq V$, $G[V- V']$ is also denoted as $G - V'$. Similarly, for an edge set $E' \subseteq E$, $G - E'$ denotes the subgraph $G' = (V, E\setminus E')$.

The {\sc Ramsey number} for a given pair of positive integers $(a,b)$ is the minimum number such that any graph with the Ramsey number of vertices either has an induced independent set of size $a$ or an induced clique of size $b$. The Ramsey number for $(a,b)$ is denoted by $R(a,b)$.

We have already seen what \abGraph{s} are. Below, is a formal definition of the graph class as well as some related definitions.
\begin{definition}{\abGraph}
 A graph $G$ is an \abGraph{} if its vertex set can be partitioned into $r$ independent sets and $\ell$ 
cliques. We call such a partition of $V(G)$ an \abPartition. An \ICpartition, of an \abGraph{} $G$, is a partition $(V_1,V_2)$ of 
$V(G)$ such that $G[V_1]$ can be partitioned into $r$ independent sets and $G[V_2]$ can be partitioned into $\ell$ cliques.   
\end{definition}
For fixed $r,\ell \geq 0$, the class of \abGraph{s} is closed under induced subgraphs.

The following observation is useful in the understanding of the algorithms presented in the paper

\begin{observation}\label{ab_intersection}
 Let $P=(P_I,P_C)$ and $P'=(P'_I,P'_C)$ be two \ICpartitions{} of an \abGraph{} $G$. Then  
 $|P_I \cap P_C'| \leq r\ell$ and $|P_I' \cap P_C| \leq r\ell$.
\end{observation}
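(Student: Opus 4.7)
The plan is to prove the two inequalities by looking at $G$ restricted to the intersection $P_I \cap P_C'$ (the other case is symmetric) and exploiting the fact that a set which is simultaneously independent and a clique has at most one vertex.

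First I would set $X := P_I \cap P_C'$ and consider $G[X]$. Since $(P_I, P_C)$ is an IC-partition, $G[P_I]$ decomposes into $r$ independent sets $I_1, \dots, I_r$, and since $(P_I', P_C')$ is an IC-partition, $G[P_C']$ decomposes into $\ell$ cliques $C_1, \dots, C_\ell$. Intersecting both decompositions with $X$, the sets $I_i \cap X$ (for $i = 1, \dots, r$) partition $X$ into at most $r$ independent sets of $G$, while the sets $C_j \cap X$ (for $j = 1, \dots, \ell$) partition $X$ into at most $\ell$ cliques of $G$.

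The key observation is that for any $i,j$, the set $I_i \cap C_j \cap X$ is simultaneously an independent set and a clique of $G$; such a set can contain at most one vertex, for otherwise any two of its vertices would have to be both adjacent (clique) and non-adjacent (independent), a contradiction. Summing over all $i \in \{1,\dots,r\}$ and $j \in \{1,\dots,\ell\}$ and using that the sets $\{I_i \cap C_j \cap X\}_{i,j}$ cover $X$ (since each vertex of $X$ lies in exactly one $I_i$ and exactly one $C_j$), we obtain
\[
|P_I \cap P_C'| = |X| = \sum_{i=1}^{r}\sum_{j=1}^{\ell} |I_i \cap C_j \cap X| \leq r\ell.
\]
The same argument, with the roles of the two IC-partitions swapped, yields $|P_I' \cap P_C| \leq r\ell$. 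There is no real obstacle here; the entire proof is essentially the one-line remark that a clique-independent set has at most one vertex, applied $r\ell$ times.
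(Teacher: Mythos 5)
Your proof is correct and follows essentially the same route as the paper's: both rest on the fact that an independent set and a clique share at most one vertex, summed over the $r$ independent sets of $P_I$ and the $\ell$ cliques of $P_C'$. Your version merely writes the double counting out as an explicit sum over the $r\ell$ pairwise intersections, which is a slightly more formal presentation of the identical argument.
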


\begin{proof}
Consider an independent set $I \in P_I$ and a clique $C \in P_C'$. At most $1$ vertex of $C$ can also be contained in $I$. There are at most $r$ independent sets in $P_I$ 
and so $P_I$ can contain at most $r$ vertices from $C$. There are at most $\ell$ cliques in $P_C'$ each of which can have an intersection of at most $r$ vertices with $P_I$.
Hence, $|P_I \cap P_C'| \leq r\ell$. Similarly, we can prove that  $|P_I' \cap P_C| \leq r\ell$. \qed
\end{proof}

\section{Vertex Deletion for \abGraph{s}}
\label{sec:verdelabg}
In this section we first show that \twotwopartization{} is in \FPT, using  iterative compression. 
Then we explain how to reduce \twoonepartization{} and \onetwopartization{} to \twotwopartization. 
Our algorithm for \twotwopartization{} combines the iterative compression technique with a polynomial bound on the number of \ICpartitions{} of a \twotwoGraph. The following
Lemma tells about an algorithm to recognize whether a graph is a \twotwoGraph{} and also about an algorithm to compute all such \ICpartitions. These results were shown in several papers \cite{Brandstadt98thecomplexity,Feder03listpartitions}.

\begin{lemma}
\label{lemma:boundpartition}
Given a graph $G$ on $n$ vertices and $m$ edges we can recognize whether $G$ is a \twotwoGraph{} in $\Oh((n+m)^2)$ time. Also, a \twotwoGraph{} can have at most $n^8$ \ICpartitions{} and all the \ICpartitions{} can be enumerated in $\Oh(n^8)$ time.
\end{lemma}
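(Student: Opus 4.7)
The plan is to address the three claims of the lemma---polynomial-time recognition, the $n^8$ upper bound on the number of \ICpartitions{}, and their enumeration---separately, with the middle claim doing most of the combinatorial work.

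For the recognition in $\Oh((n+m)^2)$ time, I would invoke the existing algorithms of \cite{Brandstadt98thecomplexity,Feder03listpartitions}. Their approach formulates membership in the class of \twotwoGraph{s} as a list-partition / 2-SAT question: for every vertex $v$, introduce a Boolean variable deciding whether $v$ lies in the bipartite side $V_1$ or in the co-bipartite side $V_2$; for every pair $u,v$, add a constant number of clauses reflecting that a non-edge forbids both endpoints from lying in the same clique of $V_2$ and that an edge forbids both endpoints from lying in the same independent set of $V_1$. The resulting 2-SAT instance has size $\Oh((n+m)^2)$ and is solved in linear time in its size by the standard algorithm, giving the bound. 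Since this is essentially off the shelf, I would only sketch the reduction and defer to the cited papers.

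For the bound on the number of \ICpartitions{}, I would lean directly on Observation \ref{ab_intersection} specialised to $r=\ell=2$. Fix any \ICpartition{} $P=(V_1,V_2)$ of $G$ produced by the recognition step. For any other \ICpartition{} $P'=(V_1',V_2')$, the observation yields $|V_1 \cap V_2'| \le 4$ and $|V_2 \cap V_1'| \le 4$, so that $V_1' = (V_1 \setminus A) \cup B$ for some sets $A \subseteq V_1$ and $B \subseteq V_2$ with $|A|,|B| \le 4$. Hence each \ICpartition{} is encoded by a pair $(A,B)$ of sets of size at most $4$, and the number of such encodings is bounded by $\binom{n}{\le 4}^2 \le n^8$.

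For the enumeration, the algorithm suggests itself: run the recognition algorithm to obtain one \ICpartition{} $(V_1,V_2)$, then iterate over all pairs $(A,B)$ of subsets of $V_1$ and $V_2$ of size at most $4$, perform the swap, and verify that the resulting bipartition is a valid \ICpartition{}. The main obstacle I foresee is meeting the tight $\Oh(n^8)$ running time bound: since the iteration alone visits $\Theta(n^8)$ candidates, each validity check must run in amortised constant time. This will require maintaining the bipartition of $G[V_1]$ and the clique cover of $G[V_2]$ as auxiliary data structures and updating them locally for each swap of at most eight vertices, rather than recomputing them from scratch via the recognition algorithm. I expect this bookkeeping to be the main technical burden, but the key combinatorial fact underlying the $n^8$ bound is a transparent consequence of Observation \ref{ab_intersection}, so the enumeration does not require any fundamentally new idea beyond careful implementation.
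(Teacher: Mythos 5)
The paper does not actually prove this lemma; it is stated as imported from \cite{Brandstadt98thecomplexity,Feder03listpartitions}, so there is no internal proof to compare against. Your argument for the central quantitative claim --- the $n^8$ bound --- is correct and is essentially the standard ``sparse--dense partition'' argument from those references: fix one \ICpartition{} $(V_1,V_2)$, apply Observation~\ref{ab_intersection} with $r=\ell=2$ to conclude that any other \ICpartition{} $(V_1',V_2')$ satisfies $|V_1\cap V_2'|\le 4$ and $|V_2\cap V_1'|\le 4$, and observe that the pair $(A,B)=(V_1\cap V_2',\,V_2\cap V_1')$ determines $(V_1',V_2')$, giving at most $\binom{n}{\le 4}^2\le n^8$ partitions. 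This is exactly the mechanism the paper itself uses in the analogous Lemma~\ref{lemma:split_partition_intersection}, and it is the right proof.

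Two caveats. First, your 2-SAT sketch for recognition is not a correct encoding as written: with a single Boolean per vertex recording only the side $V_1$ versus $V_2$, a non-edge $uv$ does \emph{not} forbid both endpoints from lying in $V_2$, since they may lie in the two different cliques of $V_2$; likewise an edge does not forbid both endpoints from $V_1$. The constraints are genuinely about the four parts $I_1,I_2,C_1,C_2$, and the actual algorithms in the cited papers work differently (e.g.\ enumerate the candidate bipartitions first and then test each side by $2$-colouring $G[V_1]$ and $\overline{G}[V_2]$). Since you defer this part to the citations, as the paper does, this is a presentational flaw rather than a fatal one. Second, your claim that each of the $\Theta(n^8)$ candidate swaps can be validated in amortised constant time is asserted without justification and is doubtful: moving up to eight vertices across the cut can change the $2$-colourability of $G[V_1]$ or $\overline{G}[V_2]$ globally, and dynamic bipartiteness under vertex moves is not a constant-time operation. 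The honest bound from your construction is $\Oh(n^8(n+m))$; this weaker bound suffices everywhere the lemma is invoked in the paper (it is only applied to graphs on $\Oh(k)$ vertices, where any polynomial overhead is absorbed), so I would state that instead of promising the delicate bookkeeping.
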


For a graph $G$, we say $S\subseteq V(G)$ is a \twotwoDeletion, if $G- S$ is a \twotwoGraph. 
Now we describe the iterative compression technique and its application to the \twotwopartization{} problem.

\smallskip
\noindent
\textbf{Iterative Compression for \twotwopartization.} 
Let $(G,k)$ be an input instance of \twotwopartization{} and let $V(G) =\{v_1,\ldots,v_n\}$. We define, for every $1\leq i \leq \vert V(G) \vert$, the vertex set
$V_i=\{v_1,\ldots,v_i\}$. 
Denote $G[V_i]$ as $G_i$. We iterate through the instances $(G_i,k)$ starting from $i = k+5$. Given the $i^{th}$ instances and a known \twotwoDeletion{} $S'_i$ of size at most
$k+1$, our objective is to obtain a \twotwoDeletion{} $S_i$ of size at most $k$. The formal definition of this compression problem is as follows.
 
\defparproblemoutput{{\sc \twotwopartization{} Compression}}{A graph $G$ and a $k+1$ sized vertex subset $S'\subseteq V(G)$ such that $G- S'$
is a \twotwoGraph}{$k$}{A vertex subset $S \subseteq V(G)$  of size at most $k$ such that $G- S$
is a \twotwoGraph?}

We reduce the \twotwopartization{} problem to $n-k-4$ instances of the {\sc \twotwopartization{} Compression} problem in the following manner. 
When $i = k+5$, the set $V_{k+1}$ is a \twotwoDeletion{} of size at most $k+1$ for $G_{k+5}$. Let $I_i = (G_i,S'_i,k)$ be the $i^{th}$ instance of {\sc \twotwopartization{}
Compression}. If $S_{i-1}$ is a $k$-sized solution for $I_i$, then $S_{i-1} \cup \{v_i\}$ is a $(k+1)$-sized \twotwoDeletion{} for $G_i$. Hence, we start the iteration with the
instance $I_{k+5} = (G_{k+5}, V_{k+1},k)$ and try to obtain a \twotwoDeletion{} of size at most $k$. If such a solution $S_{k+5}$ exists, we set $S'_{k+5} = S_{k+5} \cup
\{v_{k+6}\}$ and ask of a $k$-sized solution for the instance $I_{k+6}$, and so on. If, during any iteration, the corresponding instance does not have a \twotwoDeletion{} of size at most 
$k$, it implies that the original instance $(G,k)$ is a \NO{} instance for \twotwopartization. If the input instance $(G,k)$ is a \YES{} instance, then $S_n$ is a $k$-sized
\twotwoDeletion{} for $G$, where $n=\vert V(G)\vert$. Since there are at most $n$ iterations, the total time taken by the algorithm to solve \twotwopartization{} is at most $n$
times the time taken to solve {\sc \twotwopartization{} Compression}. The above explained template for doing iterative compression will be used for approximation algorithms as well as for parameterized algorithms for edge versions of these problems. 

Next we show that {\sc \twotwopartization{} Compression} is in \FPT. the arguments above imply that \twotwopartization{} is also in \FPT. 
\begin{lemma}
\label{lem:22comp}
{\sc \twotwopartization{} Compression} can be solved deterministically in time $3.3146^{k}\vert V(G)\vert ^{\Oh(1)}$. 
\end{lemma}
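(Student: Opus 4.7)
The plan is to apply iterative compression, combined with the enumeration of \ICpartitions{} granted by Lemma~\ref{lemma:boundpartition} and the $\Oh(2.3146^k \cdot n^{\Oh(1)})$-time algorithm for \OCT{} of Lokshtanov et al.~\cite{LokshtanovNRRS14}. Since $G - S'$ is a \twotwoGraph{}, Lemma~\ref{lemma:boundpartition} allows us to enumerate all its at most $n^8$ \ICpartitions{} in polynomial time. For each enumerated IC-partition $(V_1^*, V_2^*)$ of $G - S'$, the goal is to find a solution $S$ of size at most $k$ whose IC-partition of $G - S$ restricts, on the common domain $V \setminus (S \cup S')$, to a partition close to $(V_1^*, V_2^*)$. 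Observation~\ref{ab_intersection} ensures that any two IC-partitions of a \twotwoGraph{} agree on all but at most $2r\ell = 8$ vertices in their common domain; we therefore enumerate the (at most $8$) vertices that may switch sides in an additional $n^{\Oh(1)}$ factor, fixing a tentative partition of $V \setminus S'$.

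We then branch on the intersection $X := S \cap S'$, enumerating all subsets of $S'$ by size $i = |X|$, contributing $\binom{k+1}{i}$ branches. For each fixed $X$, the vertices $Y = S' \setminus X$ are marked as undeletable and must be incorporated into a bipartition $(V_1, V_2)$ of $V \setminus X$ extending the tentative one. The remaining task becomes: find a minimum $Z \subseteq V \setminus S'$ of size at most $k - i$ such that $G[V_1 \setminus Z]$ is bipartite and $G[V_2 \setminus Z]$ is co-bipartite. This splits into one \OCT{} instance on $G[V_1]$ (with $Y \cap V_1$ marked undeletable) and one \OCT{} instance on $\overline{G}[V_2]$ (with $Y \cap V_2$ marked undeletable), each solvable in $\Oh(2.3146^{k-i} \cdot n^{\Oh(1)})$ time by the algorithm of Lokshtanov et al.

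Summing over all branches and using the binomial identity
\[
\sum_{i=0}^{k+1}\binom{k+1}{i}\,\alpha^{k-i} \;=\; \frac{(1+\alpha)^{k+1}}{\alpha} \qquad \text{with } \alpha = 2.3146,
\]
yields the claimed total running time of $\Oh(3.3146^k \cdot n^{\Oh(1)})$, since $1+\alpha = 3.3146$. The whole procedure is deterministic because every ingredient (IC-partition enumeration, swap enumeration, subset enumeration for $X$, and the \OCT{} algorithm) is.

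The main obstacle lies in distributing $Y = S' \setminus X$ between $V_1$ and $V_2$ without incurring a $2^{|Y|}$ overhead, which would blow up the bound to roughly $\Oh(5.6292^k \cdot n^{\Oh(1)})$. I will circumvent this by exploiting the rigidity of the fixed IC-partition: the admissible placements of each $y \in Y$ into the four classes $I_1, I_2, C_1, C_2$ are heavily constrained by its adjacencies to these classes, so combined with the $\Oh(1)$-sized swap set from Observation~\ref{ab_intersection} the residual choices can be absorbed either into the polynomial factor or directly into the \OCT{} calls via additional pre-labelled (undeletable) vertices, so that each branch of the enumeration spends only $\Oh(2.3146^{k-i} \cdot n^{\Oh(1)})$ time as required.
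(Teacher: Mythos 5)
Your overall architecture coincides with the paper's: guess the intersection $X=S\cap S'$ over all $\binom{k+1}{i}$ subsets, use Observation~\ref{ab_intersection} to align the hypothetical solution's \ICpartition{} with a known one of $G-S'$ up to $\Oh(1)$ swapped vertices, and reduce each branch to two \OCT{} calls solved by the algorithm of Lokshtanov et al.; the binomial summation is also the same. However, the step you yourself flag as ``the main obstacle'' --- placing the undeletable vertices $Y=S'\setminus X$ on the independent-sets side or the cliques side without a $2^{|Y|}$ overhead --- is not actually resolved by your argument, and this is precisely where the one nontrivial idea of the proof lives. Your proposed fix, that the placement of each $y\in Y$ is ``heavily constrained by its adjacencies to the classes $I_1,I_2,C_1,C_2$,'' does not hold: those classes consist of vertices of $V\setminus S'$ which the new solution $S$ is free to delete, so a vertex of $Y$ with many neighbours in $I_1$ can still legitimately end up on the independent side after those neighbours are removed. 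Adjacency to the fixed partition therefore imposes no hard constraint, and ``absorbing the residual choices into the \OCT{} calls'' is not a construction --- the two \OCT{} instances are only defined once the bipartition of $Y$ has been committed to.

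The resolution in the paper is structural and internal to $Y$ rather than based on its external adjacencies: since $Y=S'\setminus S\subseteq V(G)\setminus S$ and $G-S$ is a \twotwoGraph{}, the induced subgraph $G[Y]$ is itself a \twotwoGraph{} (the class is closed under induced subgraphs). By Lemma~\ref{lemma:boundpartition}, $G[Y]$ therefore has at most $|Y|^8=\Oh(k^8)$ \ICpartitions{}, all enumerable in $\Oh(k^8)$ time, and the correct split of $Y$ between the two sides must restrict to one of them (its independent part must be $2$-colourable and its clique part coverable by two cliques). Enumerating these candidate splits costs only a $k^{\Oh(1)}$ factor per branch, which keeps each branch at $2.3146^{k-i}\,n^{\Oh(1)}$ and yields the claimed bound. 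Without this observation (or an equivalent one), your proof is incomplete at its critical step. A minor further remark: marking $Y$ as undeletable inside the \OCT{} calls is unnecessary --- any small enough pair of OCTs found for the two constructed instances yields a valid \twotwoDeletion{}, whether or not it touches $Y$ --- so you can invoke the plain \OCT{} algorithm as stated.
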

\begin{proof}
We design an algorithm for {\sc \twotwopartization{} Compression}. Let 
$(G,S')$ be the instance of the problem and let $(P_I',P_C')$ be an \ICpartition{} of $G - S'$. Let $S$ be an {\em hypothetical solution} of size $k$ for the problem, which the 
algorithm suppose to compute. Let $(P_I,P_C)$ be an \ICpartition{} of $G- S$. The algorithm first guesses a partition $(Y,N)$ of $S'$ 
such that $Y=S'\cap S$ and $N=S'- S$. 
After this guess, the objective is to compute a set $Z$ of size at most $k'=k-\vert Y\vert$ such that $G- (Z\cup Y)$ is 
a \twotwoGraph. Also note that since $N$ is not part of the solution $S$, $G[N]$  is a \twotwoGraph. 
Consider the two \ICpartitions{} $(P_I- (S\cup S'),P_C- (S\cup S'))$ and $(P_I'- (S\cup S'),P_C'- (S\cup S'))$  
of the \twotwoGraph{} $G- (S\cup S')$. By Observation~\ref{ab_intersection} we know that the cardinality of each of the set $(P_I\cap P_C')- (S\cup S')$ and 
$(P_C\cap P_I')- (S\cup S')$ are bounded by $4$. So now the algorithm guesses the set $V_I=(P_I\cap P_C')- (S\cup S')$ 
and $V_C=(P_C\cap P_I')- (S\cup S')$, each of them having size at most $4$. After the guess of $V_I$ and $V_C$, any vertex in $P_C'- V_I$ either belongs to $P_C$ or belongs to the hypothetical solution $S$. Similarly 
any vertex in $P_I'- V_C$ either belongs to $P_I$ or belongs to the hypothetical solution $S$.
By Lemma~\ref{lemma:boundpartition} we know that the number of \ICpartitions{} of $G[N]$ is at most $\Oh(k^8)$ 
and  
these partitions can be enumerate in time $\Oh(k^8)$ . The algorithm now guesses an \ICpartition{} $(N_I,N_C)$ of $G[N]$ such that 
$N_I\subseteq P_I$ and $N_C\subseteq P_C$. 
Now consider the partition $(A,B)=((P_I'\cup N_I \cup V_I)- V_C, (P_C'\cup N_C \cup V_C)- V_I)$. 
Any vertex $v\in A$ either belongs to $P_I$ or belongs to the hypothetical solution $S$ and any 
any vertex $v\in B$ either belongs to $P_C$ or belongs to the solution $S$. 
So the objective is to find two sets $U\subseteq A$ and $W\subseteq B$ such that $G[A- U]$ 
is a bipartite graph, $G[B- W]$ is the complement of a bipartite graph and $\vert U\vert +\vert W\vert \leq k'$.
As a consequence, the algorithm guesses the sizes $k_1$ of $U$ and $k_2$ of $W$. 
Then the problem reduced to finding an odd cycle transversal(OCT) of size $k_1$ for $G[A]$ 
and an OCT of size $k_2$ for the complement of the graph $G[B]$.  
Hence, our algorithm runs the current best 
algorithm for \OCT{}, 
presented in~\cite{LokshtanovNRRS14} 
for finding an OCT $U$ of size $k_1$ in $G[A]$ and for finding an OCT $W$ of size $k_2$ in the complement of $G[B]$. 
The running times of the \OCT{} algorithm on $G[A]$ and on the complement of $G[B]$ are 
$2.3146^{k_1}\vert V(G)\vert^{\Oh(1)} $ and $2.3146^{k_2} \vert V(G)\vert^{\Oh(1)}$ respectively. 
Finally, our algorithm 
outputs $Y\cup U\cup W$.

\smallskip

\noindent
\textbf{Running Time.} 
Let $n=\vert V(G)\vert$. 
The algorithm guesses the set $Y=S\cap S'$. First we fix a set $Y$ of size $k-i$ and compute the running time 
of this particular guess. The algorithm guess $V_C$ and $V_I$ each of size at most $4$. The number of such 
guesses is bounded by $\Oh(n^8)$. Our algorithm also guess a partition $(N_I,N_C)$ of $S' - Y$. By Lemma~\ref{lemma:boundpartition} 
number of such guesses are bounded by $k^8$. At the end, the algorithm guesses $k_1$ and $k_2$ 
such that $k_1+k_2 = k-\vert Y\vert =i$. Then our algorithm executes algorithm for \OCT{} 
for two instances, running in time $2.3146^{k_1}n^{\Oh(1)}$ and $2.3146^{k_2}n^{\Oh(1)}$ 
Thus the running time for a particular guess $Y$ is bounded by $2.3146^{i}n^{\Oh(1)}$. 
The number of guesses for $Y$ of size $i$ is exactly ${k+1 \choose i}$. 
Since $\sum_{i=0}^{k+1} {k+1 \choose i} 2.3146^{i}n^{\Oh(1)} = 3.3146^{k}n^{\Oh(1)}$, the total running time is bounded by $3.3146^{k}n^{\Oh(1)}$. \qed
\end{proof}

Lemma~\ref{lem:22comp} and the discussions preceding it imply the following theorem. 
\begin{theorem}
\label{thm:22verdel} 
\twotwopartization{} can be solved in time $3.3146^{k}\vert V(G)\vert ^{\Oh(1)}$. 
\end{theorem}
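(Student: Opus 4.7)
My plan is to obtain Theorem~\ref{thm:22verdel} as a direct consequence of Lemma~\ref{lem:22comp} combined with the iterative compression template already described in the preceding discussion. The idea is that the compression problem is the ``hard core'' of the task, and once it is solved efficiently, the outer loop adds only a polynomial factor to the overall running time.

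First, I would set up the iterative compression reduction formally. Let $(G,k)$ be the input instance with $V(G)=\{v_1,\dots,v_n\}$, and define $V_i=\{v_1,\dots,v_i\}$ and $G_i=G[V_i]$. The base case is $i=k+5$: here $V_{k+1}$ is trivially a \twotwoDeletion{} of $G_{k+5}$ of size $k+1$, because $G_{k+5}-V_{k+1}$ has only $4$ vertices and any graph on at most $4$ vertices is easily seen to be a \twotwoGraph. This supplies a valid starting instance $I_{k+5}=(G_{k+5},V_{k+1},k)$ for {\sc \twotwopartization{} Compression}.

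Next, I would run the inductive step from $i=k+5$ up to $i=n$. Assuming we have a \twotwoDeletion{} $S_{i-1}$ of $G_{i-1}$ of size at most $k$, the set $S'_i:=S_{i-1}\cup\{v_i\}$ is a \twotwoDeletion{} of $G_i$ of size at most $k+1$; we then invoke the compression algorithm of Lemma~\ref{lem:22comp} on the instance $I_i=(G_i,S'_i,k)$. If it returns a set of size at most $k$, we take it as $S_i$ and continue; if for some $i$ it reports that no such set exists, we answer \NO{} for the original instance, since any \twotwoDeletion{} of $G$ of size at most $k$ restricted to $V_i$ would be a \twotwoDeletion{} of $G_i$ of size at most $k$. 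Otherwise we reach $i=n$ and return $S_n$ as a solution for $(G,k)$.

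For correctness, I would argue in both directions: if the algorithm returns $S_n$, then by construction $G_n-S_n=G-S_n$ is a \twotwoGraph{} and $|S_n|\le k$; conversely, if $(G,k)$ is a \YES{} instance, then the restriction of a solution to any $V_i$ is a \twotwoDeletion{} of $G_i$ of size at most $k$, so the compression step always succeeds. For the running time, there are at most $n-k-4<n$ iterations, and each invocation of the compression algorithm costs $3.3146^{k}n^{\Oh(1)}$ by Lemma~\ref{lem:22comp}, giving a total of $3.3146^{k}n^{\Oh(1)}$. There is no genuine obstacle here, since Lemma~\ref{lem:22comp} already carries all the algorithmic weight; the only thing to be careful about is choosing the base index ($i=k+5$ rather than $i=k+1$) so that the trivial initial solution has size $k+1$ and the compression subroutine is genuinely applied, matching its input specification.
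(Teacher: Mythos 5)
Your proposal is correct and follows exactly the paper's route: the theorem is derived from Lemma~\ref{lem:22comp} via the standard iterative compression loop, starting at $i=k+5$ with the trivial solution $V_{k+1}$ and using the fact that \twotwoGraph{s} are closed under induced subgraphs to justify both the \NO{} answers and the termination argument. The details you fill in (the base case being a \twotwoGraph{} because it has at most $4$ vertices, and the restriction argument for correctness) are the same ones the paper leaves implicit in its ``discussions preceding'' the theorem.
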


\noindent
\textbf{\twoonepartization:}
There is a simple reduction from the \twoonepartization{} problem to the \twotwopartization{} problem. Suppose we are given a graph $G$, where $\vert V(G) \vert = n$. We construct a graph $G' = G \uplus \hat{C}$, where $\hat{C}$ is a clique on $n+3$ new vertices.  That is, $G'$ is the disjoint union of $G$ and $\hat{C}$. The next lemma relates the graphs $G$ and $G'$.

\begin{lemma}\label{twoone_twotwo}
 For any integer $t\leq n$, $(G,t)$ is a \YES{} instance of \twoonepartization{} if and only if $(G',t)$ is a \YES{} instance of \twotwopartization. 
\end{lemma}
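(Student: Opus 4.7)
The plan is to prove both implications by exploiting the structural fact that $G'$ is a disjoint union, so any clique of $G'$ (or any of its induced subgraphs) must lie either entirely inside $V(G)$ or entirely inside $V(\hat{C})$.

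For the forward direction, the plan is simply to reuse the same deletion set: starting from a \twooneDeletion{} $S$ of $G$ of size at most $t$, take any IC-partition $(I_1, I_2, K)$ of $G-S$ and extend it to $(I_1, I_2, K, V(\hat{C}))$, which is a $(2,2)$-IC-partition of $G'-S$ because $\hat{C}$ is vertex-disjoint from $G-S$ and is itself a clique. This witnesses $S$ as a \twotwoDeletion{} of $G'$.

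For the reverse direction, suppose $S' \subseteq V(G')$ is a \twotwoDeletion{} of size at most $t$ and set $S = S' \cap V(G)$, so $|S| \le t$. Fix any $(2,2)$-IC-partition $(I_1, I_2, C_1, C_2)$ of $G'-S'$. The first step is to observe, from the disjoint-union structure, that each of $C_1, C_2$ is contained entirely in $V(G)$ or entirely in $V(\hat{C})$. The second step is a pigeonhole argument on $C^* = V(\hat{C}) \setminus S'$: since $|\hat{C}| = n+3$ and $|S'| \le t \le n$, we have $|C^*| \ge 3$, and because $C^*$ is a clique, at most one of its vertices can lie in any independent set, giving $|C^* \cap (I_1 \cup I_2)| \le 2$. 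Hence at least one vertex of $C^*$ sits in $C_1 \cup C_2$, and combined with the previous observation, at least one of $C_1, C_2$, say $C_1$, is contained entirely in $V(\hat{C})$. The final step is then to restrict the partition: $(I_1 \cap V(G),\, I_2 \cap V(G),\, C_2 \cap V(G))$ is a $(2,1)$-IC-partition of $G-S$ (the last part being a clique, possibly empty in case $C_2 \subseteq V(\hat{C})$ as well), exhibiting $S$ as a \twooneDeletion{} of $G$.

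The only delicate point is the pigeonhole step that forces one of the two $(2,2)$-cliques to be absorbed by $\hat{C}$; every other step is bookkeeping. This is exactly where the choice to pad $\hat{C}$ with $n+3$ vertices and the hypothesis $t \le n$ are used, since without this slack $C^*$ could in principle be split across the two independent sets and no structural information on $C_1, C_2$ would follow.
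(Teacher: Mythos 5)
Your proof is correct and follows essentially the same route as the paper's: pad $G$ with a clique large enough that, after deleting at most $t\le n$ vertices and losing at most one vertex to each independent set, some vertex of $\hat{C}$ survives into a clique part, forcing that part entirely into $\hat{C}$, and then restrict the partition back to $V(G)$. Your version is in fact slightly more careful than the paper's, which overlooks the case that both clique parts land inside $\hat{C}$ (handled by your remark that the remaining clique part may be empty).
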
 

\begin{proof}
 Suppose $(G,t)$ is a \YES{} instance of \twoonepartization{}. Then there is a subset $S \subseteq V(G)$, of size at most $t$, the deletion of which results in a \twooneGraph{} $G^*$. Let $G^*$ have a \twoonePartition{} $I_1\cup I_2 \cup C_1$. Then, $I_1 \cup I_2 \cup C_1 \cup \hat{C}$ is a \twotwoPartition{} for $G' - S$. Hence, $(G',t)$ is a \YES{} instance of \twotwopartization.

Conversely, suppose $(G',t)$ is a \YES{} instance of \twotwopartization. Let $S \subseteq V(G')$ be a \twotwoDeletion{} of size at most $t$. The deletion of $S$ from $G'$ results in a
\twotwoGraph{} $\tilde{G}$. Let $\tilde{G}$ have a \twotwoPartition{} $I_1\cup I_2 \cup C_1 \cup C_2$. Since $t \leq n$, and since any independent set $I$ of $G'$ can have at most $1$ vertex
from $\hat{C}$, $\vert \hat{C}- (S\cup I_1 \cup I_2)\vert \geq n-t+3$. As $\hat{C}$ is disjoint from $G$, it is only possible that either $C_1 \subseteq \hat{C}$ and $C_2
\cap \hat{C} = \emptyset$ or $C_2 \subseteq \hat{C}$ and $C_1 \cap \hat{C} = \emptyset$. Without loss of generality, suppose $C_1 \subseteq \hat{C}$ and $C_2 \cap \hat{C} = \emptyset$. Then
$S'= S - \hat{C}$ is of size at most $t$ and $G- S'$ has a \twoonePartition{} $(I_1- \hat{C}) \cup (I_2 - \hat{C})\cup C_2$. Thus, $(G,t)$ is a \YES{}
instance of \twoonepartization. \qed  
\end{proof}

Now if we are given an instance $(G,k)$ of \twoonepartization, Lemma~\ref{twoone_twotwo} tells us that it is enough to solve \twotwopartization{} on $(G',k)$. Notice that solving the \onetwopartization{} problem on an input instance $(G,k)$ is equivalent to finding a \onetwopartization{} on $(\overline{G},k)$, where $\overline{G}$ is the complement graph of $G$. Thus, we get the following as a corollary of Theorem~\ref{thm:22verdel}.

\begin{corollary}
 \onetwopartization{} and \twoonepartization{} have \FPT{} algorithms that run in $3.3146^{k}n^{\Oh(1)}$ time.
\end{corollary}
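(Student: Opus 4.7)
The plan is to reduce both problems to \twotwopartization{} and then invoke Theorem~\ref{thm:22verdel}. For \twoonepartization{}, I would apply Lemma~\ref{twoone_twotwo} directly. Given an instance $(G,k)$ with $n = |V(G)|$, first dispose of the trivial case $k \geq n$ by reporting \YES{} (taking $S = V(G)$ leaves the empty graph, which is vacuously a $(2,1)$-graph). Otherwise $k < n$, so the hypothesis $t \leq n$ of Lemma~\ref{twoone_twotwo} is satisfied; construct $G' = G \uplus \hat{C}$ where $\hat{C}$ is a clique on $n+3$ fresh vertices. The lemma asserts that $(G,k)$ is a \YES{} instance of \twoonepartization{} if and only if $(G',k)$ is a \YES{} instance of \twotwopartization{}, and since $|V(G')| = 2n+3$, Theorem~\ref{thm:22verdel} gives an algorithm for the latter running in time $3.3146^{k}(2n+3)^{\Oh(1)} = 3.3146^{k} n^{\Oh(1)}$.

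For \onetwopartization{} I would exploit complementation. Since taking the complement swaps independent sets and cliques, a graph $H$ is a \onetwoGraph{} if and only if $\overline{H}$ is a \twooneGraph{}. Moreover, for any $S \subseteq V(G)$ we have $\overline{G - S} = \overline{G} - S$, so $G - S$ is a \onetwoGraph{} if and only if $\overline{G} - S$ is a \twooneGraph{}. Consequently $(G,k)$ is a \YES{} instance of \onetwopartization{} if and only if $(\overline{G},k)$ is a \YES{} instance of \twoonepartization{}, and running the algorithm from the previous paragraph on $(\overline{G},k)$ yields the same $3.3146^{k} n^{\Oh(1)}$ bound.

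The corollary is essentially a bookkeeping composition of Lemma~\ref{twoone_twotwo}, the complementation symmetry, and Theorem~\ref{thm:22verdel}, so no real obstacle arises. The only minor verification is that the hypothesis $t \leq n$ of Lemma~\ref{twoone_twotwo} does not bite in the parameter regime of interest, which is handled by the trivial $k \geq n$ branch noted above; and that complementation can be carried out in polynomial time without affecting the vertex count, so the $n^{\Oh(1)}$ overhead from preprocessing is absorbed into the stated bound.
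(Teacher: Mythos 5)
Your proof is correct and follows essentially the same route as the paper: reduce \twoonepartization{} to \twotwopartization{} via Lemma~\ref{twoone_twotwo} and apply Theorem~\ref{thm:22verdel}, then handle \onetwopartization{} by complementation. The only addition is your explicit treatment of the $k \geq n$ corner case to satisfy the hypothesis $t \leq n$ of Lemma~\ref{twoone_twotwo}, which the paper leaves implicit.
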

 
%
%
%

\section{ Approximation algorithms for Vertex Deletion to \abGraph{s}}
In this section we give a polynomial time approximation algorithm for \twotwopartization. 
That is, we design an algorithm for \twotwopartization, which takes an instance 
$(G,k)$, runs in polynomial time and outputs either a solution of size $\Oh(k^{3/2})$ or concludes that 
$(G,k)$ is a \NO{} instance. 
Since the reduction from  \twoonepartization{} to \twotwopartization{}, given in Lemma~\ref{twoone_twotwo}, is  an 
approximation preserving reduction, we can get a similar approximate algorithm for \twoonepartization. Similarly, since \onetwopartization{} on a graph is equivalent to \twoonepartization{} in the complement graph, we can get an approximation algorithm for \onetwopartization. 
The approximation algorithm we discuss in this section, is useful for obtaining Turing kernels for
\abpartization, when $1\leq r,\ell \leq 2$. Finally, we design a factor 
$\Oh(\sqrt{\log n})$ approximation algorithms for these problems.

First we define superclass of \abGraph{s}, called {\em \absplitGraph{s}} and then design a polynomial time recognition algorithm for 
\absplitGraph{s}, which is used for approximation algorithm for \twotwopartization.   
The notion of \absplitGraph{s} was introduced in~\cite{Gyarfas98}. 
\begin{definition}[\absplitGraph]
 A graph $G$ is an \absplitGraph{} if its vertex set can be partitioned into $V_1$ and $V_2$ such that the size of a largest clique in $G[V_1]$ is bounded by $r$
 and the size of a largest independent set in $G[V_2]$ is bounded by $\ell$. We call such a bipartition for the graph $G$ an 
\absplitPartition.
\end{definition}
Now we give a polynomial time algorithm which takes a graph $G$ as input and outputs an \absplitPartition{} if $G$ is an \absplitGraph. 
We design such an algorithm using iterative compression. Essentially we show that 
the following problem, {\sc \absplitPartition{} Compression}, can be solved in 
polynomial time. 

\defproblem{{\sc \absplitPartition{} Compression}}{A graph $G$ with $V(G)=V\cup \{v\}$  
and an \absplitPartition{} $(A,B)$ of $G[V]$ }{An \absplitPartition{} of $G$, if $G$ is an \absplitGraph{}, and \NO{} otherwise} 

Like in the case of the \FPT{} algorithm for \twotwopartization{} given in Section~\ref{sec:verdelabg}, we can show that 
by running the algorithm for {\sc \absplitPartition{} Compression} at most $n-2$ times we can get an algorithm 
which outputs an \absplitPartition{} of a given \absplitGraph. 
Our algorithm for {\sc \absplitPartition{} Compression} uses the following simple lemma. 

\begin{lemma}
 \label{lemma:split_partition_intersection}
 Let $G$ be an \absplitGraph. Let $(A,B)$ and $(A',B')$ are two \absplitPartition{s} of $G$. 
 Then $\vert A\cap B'\vert \leq R(\ell+1,r+1)-1$ and $\vert A'\cap B \vert \leq R(\ell+1,r+1)-1$, 
 where $R(r+1,\ell+1)$, is the Ramsey number. 
\end{lemma}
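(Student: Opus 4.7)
The plan is to prove both bounds by a direct appeal to Ramsey's theorem applied to the induced subgraph on the intersection. The essential observation is that a vertex set lying in the intersection of two parts of different types inherits the ``no large clique'' restriction from one partition and the ``no large independent set'' restriction from the other, which together force the set to be small.

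Concretely, set $X = A \cap B'$ and consider the induced subgraph $G[X]$. Since $X \subseteq A$ and $(A,B)$ is an $(r,\ell)$-split partition, $G[A]$ contains no clique on $r+1$ vertices, and hence neither does $G[X]$. Since $X \subseteq B'$ and $(A',B')$ is an $(r,\ell)$-split partition, $G[B']$ contains no independent set on $\ell+1$ vertices, and hence neither does $G[X]$. Thus $G[X]$ has no clique of size $r+1$ and no independent set of size $\ell+1$. By the defining property of the Ramsey number, every graph on at least $R(\ell+1, r+1)$ vertices contains either an independent set of size $\ell+1$ or a clique of size $r+1$. Therefore $|X| \leq R(\ell+1,r+1) - 1$, which is the first bound.

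The second bound is completely symmetric: applying the same argument to $Y = A' \cap B$, we see that $Y \subseteq A'$ forbids a clique of size $r+1$ in $G[Y]$, while $Y \subseteq B$ forbids an independent set of size $\ell+1$ in $G[Y]$, and the same Ramsey bound gives $|Y| \leq R(\ell+1,r+1) - 1$.

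There is no real obstacle here; the statement is essentially a one-line consequence of Ramsey's theorem once one unpacks the definition of an $(r,\ell)$-split partition. The only thing to be mindful of is keeping the roles of the two partitions straight so that the clique constraint is pulled from one side and the independent-set constraint from the other, which is exactly why the intersections $A \cap B'$ and $A' \cap B$ (rather than $A \cap A'$ or $B \cap B'$) are the ones that must be small.
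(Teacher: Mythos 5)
Your proof is correct and follows essentially the same route as the paper: both apply Ramsey's theorem to the induced subgraph on $A \cap B'$, drawing the no-$(r+1)$-clique constraint from $(A,B)$ and the no-$(\ell+1)$-independent-set constraint from $(A',B')$, and then argue symmetrically for $A' \cap B$. The only cosmetic difference is that you state it contrapositively while the paper argues by contradiction.
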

\begin{proof}
 Suppose $\vert A\cap B'\vert\geq R(\ell+1,r+1)$. By Ramsey's theorem, we know that $G[A\cap B']$ either contain an 
 independent set of size $\ell+1$ or a clique of size $r+1$. If $G[A\cap B']$ contains an independent set of size $\ell+1$, then 
 it contradicts our assumption that $(A',B')$ is an \absplitPartition{} of $G$. 
 If $G[A\cap B']$ contains a clique of size $r+1$, then 
 it contradicts our assumption that $(A,B)$ is an \absplitPartition{} of $G$. 
 This implies that $\vert A\cap B'\vert\leq R(\ell+1,r+1)-1$. 
 By similar arguments we can show that $\vert A'\cap B\vert\leq R(\ell+1,r+1)-1$.  \qed 
\end{proof}
Using Lemma~\ref{lemma:split_partition_intersection}, we show that {\sc \absplitPartition{} Compression} 
can be solved in polynomial time for any fixed constants $r$ and $\ell$. 
\begin{lemma}
 \label{lemma:polytime_split_compression}
 For any fixed constants $r$ and $\ell$, {\sc \absplitPartition{} Compression} 
 can be solved in polynomial time.
\end{lemma}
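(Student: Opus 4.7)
\medskip
\noindent
\emph{Proof proposal.} The plan is to leverage Lemma~\ref{lemma:split_partition_intersection} to show that any hypothetical $(r,\ell)$-split partition $(A^{\star}, B^{\star})$ of $G$ must be ``close'' to the known partition $(A,B)$ of $G[V]$, and then to exhaustively guess the small difference. Concretely, if $(A^{\star},B^{\star})$ is any $(r,\ell)$-split partition of $G$, then restricting to $V$ gives an $(r,\ell)$-split partition $(A^{\star}\cap V, B^{\star}\cap V)$ of $G[V]$, because induced subgraphs of $G[A^{\star}]$ and $G[B^{\star}]$ inherit the clique/independent-set bounds. Applying Lemma~\ref{lemma:split_partition_intersection} to the two partitions $(A,B)$ and $(A^{\star}\cap V, B^{\star}\cap V)$ of $G[V]$, I get that the swap sets $X := A \cap B^{\star}$ and $Y := B \cap A^{\star}$ each have size at most $R(\ell+1,r+1)-1$, a constant depending only on $r$ and $\ell$.

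This suggests the following algorithm. Enumerate all pairs $(X,Y)$ with $X \subseteq A$, $Y \subseteq B$, and $|X|, |Y| \leq R(\ell+1, r+1)-1$; for each such pair, form the candidate partition $A' := (A \setminus X) \cup Y$ and $B' := (B \setminus Y) \cup X$, and then try placing the remaining vertex $v$ either into $A'$ or into $B'$. For each of the resulting two candidate partitions of $V(G)$, verify directly whether it is an $(r,\ell)$-split partition by checking that $G[A']$ (resp.\ $G[A' \cup \{v\}]$) contains no clique of size $r+1$ and that $G[B']$ (resp.\ $G[B' \cup \{v\}]$) contains no independent set of size $\ell+1$. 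If any candidate passes the check, output it; otherwise report \NO.

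Correctness follows from the discussion above: if $G$ is an $(r,\ell)$-split graph with partition $(A^{\star},B^{\star})$, then setting $X = A \cap B^{\star}$ and $Y = B \cap A^{\star}$ produces one of the enumerated candidates, up to the side of $v$, and the verification step accepts it. Conversely, any candidate accepted by the verifier is by construction a valid $(r,\ell)$-split partition of $G$. For the running time, the number of guesses is $\sum_{i \leq R-1} \binom{|A|}{i} \cdot \sum_{j \leq R-1} \binom{|B|}{j} \cdot 2 = n^{\Oh(R(\ell+1,r+1))}$, which is polynomial in $n$ for fixed $r, \ell$. Each verification takes $\Oh(n^{r+1} + n^{\ell+1})$ time by brute force over all $(r+1)$-subsets and $(\ell+1)$-subsets, again polynomial for fixed $r,\ell$. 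The main subtlety, which is handled cleanly by Lemma~\ref{lemma:split_partition_intersection}, is to argue that only a constant-sized ``swap'' is needed between the old partition and a hypothetical new one; without that structural bound one would a priori have to search over exponentially many partitions of $V(G)$.
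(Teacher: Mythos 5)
Your proposal is correct and follows essentially the same route as the paper's proof: both restrict the hypothetical partition to $G[V]$, invoke Lemma~\ref{lemma:split_partition_intersection} to bound the two swap sets by $R(\ell+1,r+1)-1$, enumerate the $n^{\Oh(R(\ell+1,r+1))}$ candidate swaps, try $v$ on either side, and verify each candidate by brute force in $n^{\Oh(r+\ell)}$ time. No substantive differences to report.
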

\begin{proof}
 Let $(G,(A,B))$ be the given instance of {\sc \absplitPartition{} Compression}, where 
 $(A,B)$ is a \absplitPartition{} of $G[V]$. Let $n=\vert V(G) \vert$. Let $(A',B')$ be a hypothetical solution for the problem. 
 Since $G[V]$ is a subgraph of $G$, $(A'\setminus \{v\},B'\setminus\{v\})$ is an \absplitPartition{} of 
 $G[V]$. Thus, by Lemma~\ref{lemma:split_partition_intersection}, we know that 
 $\vert A\cap B'\vert \leq R(\ell+1,r+1)-1$ and $\vert A'\cap B \vert \leq R(\ell+1,r+1)-1$. 
 So our algorithm guesses the sets $U=A\cap B'$ and $W=A'\cap B$ each of size at most $R(\ell+1,r+1)$. 
 The total number of possible choices for $U$ and $W$ is clearly bounded by $n^{2R(\ell+1,r+1)}$. 
 For the correct guess $U$ and $W$, $A'\setminus \{v\}= (A\cup W)\setminus U $ and $B'\setminus \{v\}=(B\cup U)\setminus W$.
 Let $X=(A\cup W)\setminus U $ and $Y=(B\cup U)\setminus W$. 
 So now it is enough to check whether one of the $(X\cup \{v\},Y)$ or $(X,Y\cup \{v\})$ is 
 a valid \absplitPartition{} of the graph $G$ and output the result. This can be tested in time $n^{r+\ell}$ time. 
 Since there are $n^{2R(\ell+1,r+1)}$ choices for the guess $U$ and $W$, the total running time is 
 bounded by $\Oh(n^{2R(\ell+1,r+1)+r+\ell})$. This completes the proof of the lemma. \qed
\end{proof}
By applying Lemma~\ref{lemma:polytime_split_compression}, at most $n-2$ times, we can get the following lemma. 
\begin{lemma}
 \label{lemma:polytime_split_reco}
 For any fixed constants $r$ and $\ell$, there is an algorithm which takes a graph $G$ as input, 
 runs in polynomial time,  and decides whether $G$ is an \absplitGraph. Furthermore, if $G$ is an \absplitGraph{} then the algorithm
outputs an \absplitPartition{} $(V_1,V_2)$ of $G$ 
\end{lemma}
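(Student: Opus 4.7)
The plan is to apply the iterative compression template already laid out for \twotwopartization{} in Section~\ref{sec:verdelabg}. Fix an arbitrary ordering $v_1, \ldots, v_n$ of $V(G)$ and set $G_i = G[\{v_1,\ldots,v_i\}]$. For the base case I would take the trivial \absplitPartition{} $(\{v_1\},\{v_2\})$ of $G_2$, which is valid whenever $r, \ell \geq 1$ (the degenerate cases $r = 0$ or $\ell = 0$ reduce to a single Ramsey-type check on all of $G$ and can be dispatched separately). Then I would iterate through $i = 3, 4, \ldots, n$: at step $i$, invoke the algorithm from Lemma~\ref{lemma:polytime_split_compression} on the graph $G_i$ together with the \absplitPartition{} of $G_{i-1}$ computed in the previous step. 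This either produces an \absplitPartition{} of $G_i$, which is passed on to the next iteration, or returns \NO.

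The correctness hinges on the fact that the class of \absplitGraph{s} is hereditary: given an \absplitPartition{} $(V_1, V_2)$ of a graph, its restriction to any induced subgraph is again an \absplitPartition, since neither the maximum clique size in $G[V_1]$ nor the maximum independent set size in $G[V_2]$ can grow under vertex deletion. Consequently, if the compression algorithm returns \NO{} on some $G_i$, then $G_i$ is not an \absplitGraph, and therefore neither is the supergraph $G$; the overall algorithm can safely output \NO{} at that moment. On the other hand, if every compression step succeeds, the final iteration produces an \absplitPartition{} of $G_n = G$, which is exactly the output demanded by the lemma.

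For the running time, the algorithm performs at most $n - 2$ calls to the compression subroutine, each of which runs in polynomial time for fixed $r, \ell$ by Lemma~\ref{lemma:polytime_split_compression}. Since the product of $n$ with a polynomial is a polynomial, the total recognition procedure runs in polynomial time, as required.

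Essentially there is no substantive obstacle here: the technical heart of the argument is already encapsulated in Lemma~\ref{lemma:polytime_split_compression}, and the present lemma is a direct corollary via the standard iterative-compression reduction. The only small point I would make explicit is the hereditariness observation above, which ensures that an intermediate \NO{} answer faithfully certifies a \NO{} answer for the whole input, so no backtracking over vertex orderings is ever needed.
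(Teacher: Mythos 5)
Your proposal is correct and matches the paper's argument: the paper likewise obtains this lemma by running the {\sc \absplitPartition{} Compression} algorithm of Lemma~\ref{lemma:polytime_split_compression} at most $n-2$ times over an incremental vertex ordering, relying implicitly on the hereditariness of \absplitGraph{s} that you spell out. Your added details (the trivial base partition and the separate handling of $r=0$ or $\ell=0$) are sound and only make explicit what the paper leaves to the reader.
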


We know that any \abGraph{} is also an \absplitGraph. 
The following lemma gives a relation between an \absplitPartition{} and \ICpartition{} of a \abGraph. 
\begin{lemma}
 \label{lemma:rel_split_ICpartition}
 Let $G$ be an \abGraph. Let $(A,B)$ be an \ICpartition{} of $G$ and $(A', B')$ be 
 an \absplitPartition{} of $G$. Then $|A \cap B'| \leq r\ell$ and $|A' \cap B| \leq r\ell$
\end{lemma}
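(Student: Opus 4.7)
The plan is to mirror the proof of Observation~\ref{ab_intersection}, but substituting the bounds coming from the \absplitPartition{} for the trivial bounds used there. Concretely, I would unpack the two structures: the \ICpartition{} $(A,B)$ provides a partition of $A$ into $r$ independent sets $I_1,\dots,I_r$ and of $B$ into $\ell$ cliques $C_1,\dots,C_\ell$, while the \absplitPartition{} $(A',B')$ guarantees that every clique in $G[A']$ has size at most $r$ and every independent set in $G[B']$ has size at most $\ell$. The two inequalities then follow from a simple double-counting that matches one structure against the other.

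For the bound $|A\cap B'|\leq r\ell$, I would argue as follows. For each $i\in\{1,\dots,r\}$, the set $I_i\cap B'$ is an induced subgraph of the independent set $I_i$, hence itself an independent set; but it also sits inside $G[B']$, whose largest independent set has size at most $\ell$. Therefore $|I_i\cap B'|\leq \ell$. Since $A = I_1\cup\dots\cup I_r$ is a disjoint union, summing gives
\[
|A\cap B'| \;=\; \sum_{i=1}^{r} |I_i\cap B'| \;\leq\; r\ell.
\]

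The dual bound $|A'\cap B|\leq r\ell$ is obtained by the symmetric argument on the other side: for each $j\in\{1,\dots,\ell\}$, the set $C_j\cap A'$ is a clique (as a subset of the clique $C_j$) sitting inside $G[A']$, and the \absplitPartition{} condition bounds any clique in $G[A']$ by $r$, so $|C_j\cap A'|\leq r$; summing over $j$ yields $|A'\cap B|\leq r\ell$.

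There is no real obstacle here, only a minor bookkeeping caveat: one must be careful that the structural bounds used on $G[B']$ and $G[A']$ are applied to induced subgraphs of $G$, not of $G[A]$ or $G[B]$, but since independence and cliqueness are inherited by induced subgraphs this causes no issue. The statement and proof are essentially the \absplitGraph{} analogue of Observation~\ref{ab_intersection}, with Ramsey-type parameters $r$ and $\ell$ playing the role of the trivial ``at most one vertex per clique/independent set'' bound.
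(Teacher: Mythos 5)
Your proof is correct and is essentially the paper's argument: both decompose $A$ into the $r$ independent sets of the \ICpartition{} and use that each meets $B'$ in at most $\ell$ vertices (since $G[B']$ has no independent set of size $\ell+1$), the only difference being that you sum directly while the paper phrases the same count as a pigeonhole contradiction. The symmetric clique argument for $|A'\cap B|\leq r\ell$ likewise matches the paper's ``by similar arguments'' step.
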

\begin{proof}
 Suppose $|A \cap B'| \geq r\ell+1$. Since $(A,B)$ is an \ICpartition{} of an \abGraph{} $G$, 
 we know that $A$ can be partitioned into $r$ independent sets. Also, since $|A \cap B'| \geq r\ell+1$, 
 by pigeon hole principle, there is an independent set $I$ in $A$ such that $\vert I\cap B'\vert \geq \ell+1$. 
 This implies that the size of the largest independent set in $B'$ is at least $\ell+1$, contradicting 
 our assumption that $(A', B')$ is an \absplitPartition{} of $G$. Hence we have shown that $|A \cap B'| \leq r\ell$. 
 By similar arguments we can show that $|A' \cap B| \leq r\ell$. \qed
\end{proof}

Before giving an approximation algorithm for \abpartization{}, we need to mention about 
a polynomial time approximation algorithm for {\sc Odd Cycle Transversal} and finite forbidden characterization 
of \abGraph{s}. 
Using the FPT algorithm for OCT~\cite{KratschW14}, and a $\Oh(\sqrt{\log n})$-approximation algorithm for 
OCT~\cite{AgarwalCMM05}, one can prove the following proposition. 
\begin{proposition}[\cite{KratschW14}]\label{prop:oct_algo}
There is a polynomial time algorithm which takes a graph $G$ and an integer $k$ as input and 
outputs either an OCT of $G$ of size at most $\Oh(k^{3/2})$ or concludes that 
there is no OCT of size $k$ for $G$. 
\end{proposition}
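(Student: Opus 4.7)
The plan is to chain together the randomized polynomial kernelization for \OCT{} from~\cite{KratschW14} with the $\Oh(\sqrt{\log n})$-approximation algorithm for \OCT{} from~\cite{AgarwalCMM05}. The key observation that makes the composition give the claimed bound is that once we have shrunk the vertex count to a polynomial in $k$, the approximation factor $\Oh(\sqrt{\log n})$ becomes $\Oh(\sqrt{\log k})$, and multiplying by the optimum bound $k$ gives $\Oh(k \sqrt{\log k}) \subseteq \Oh(k^{3/2})$.

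More concretely, first I would feed $(G,k)$ into the randomized polynomial kernelization of~\cite{KratschW14}. With high probability this outputs an equivalent instance $(G',k)$ with $|V(G')| = k^{\Oh(1)}$; moreover, because that kernel is solution-preserving (the reduced instance is an induced subgraph of $G$, obtained by identifying a polynomial-sized set of ``relevant'' vertices via matroid-based arguments), any \OCT{} of $G'$ can be lifted to an \OCT{} of $G$ of the same size, and conversely $G$ has an \OCT{} of size at most $k$ iff $G'$ does. Second, I would invoke the $\Oh(\sqrt{\log n})$-approximation of~\cite{AgarwalCMM05} on $G'$, obtaining an \OCT{} $S'$ of $G'$ of size at most $c \cdot \tau(G') \cdot \sqrt{\log |V(G')|}$ for some absolute constant $c$, where $\tau(G')$ denotes the minimum size of an \OCT{} in $G'$.

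Third, I would compare $|S'|$ against the threshold $c k \sqrt{\log |V(G')|}$. If $|S'|$ exceeds this threshold, then necessarily $\tau(G') > k$, and the kernel equivalence forces $\tau(G) > k$, so the algorithm correctly concludes that no \OCT{} of size $k$ exists in $G$. Otherwise, I lift $S'$ to an \OCT{} $S$ of $G$ of the same size. Since $|V(G')| = k^{\Oh(1)}$, the bound becomes $|S| = \Oh\!\left(k \sqrt{\log k^{\Oh(1)}}\right) = \Oh(k \sqrt{\log k}) \subseteq \Oh(k^{3/2})$, as required.

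The main obstacle, and the reason this is not merely ``apply approximation and be done'', is that the raw approximation factor depends on the ambient graph size $n$, which is not bounded in terms of $k$. Overcoming this requires the kernelization to be truly solution-preserving, not just decision-preserving, so that an approximate \OCT{} computed on the kernel translates back to an approximate \OCT{} for $G$ without blowup; this is exactly the property established in~\cite{KratschW14} via the matroid-based randomized kernel. The randomness of the kernelization is inherited by the resulting approximation algorithm, which is acceptable here since the output will be consumed by the randomized Turing kernelization that is the ultimate goal of this section.
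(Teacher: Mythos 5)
Your composition (kernelize first, then approximate on the kernel) is circular as a proof of this proposition. The randomized polynomial kernel for \OCT{} in~\cite{KratschW14} does not come out of thin air: its very first step is to compute, in polynomial time, an approximate odd cycle transversal of size $\Oh(k^{3/2})$ (or reject), and only then does it apply the cut-covering machinery (cf.\ Lemma~\ref{lemma:oct_strong} and the appendix) to that approximate solution to identify the relevant vertices. So the statement you are asked to prove is the entry point of the kernelization you invoke, and your argument assumes what it sets out to establish. Two further inaccuracies: the reduced instance is \emph{not} an induced subgraph of $G$ --- it is built by a torso-like operation that deletes the irrelevant vertices and reinserts length-$2$ and length-$3$ paths to encode parities, so lifting a solution back requires a (short but nontrivial) argument that the internal vertices of the inserted paths can be avoided; and the algorithm you obtain is randomized, whereas the proposition claims a plain polynomial-time algorithm (and the standard proof gives a deterministic one).

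The intended argument, which also explains where the exponent $3/2$ comes from, is a simple case distinction on $k$ versus $\log n$. If $k \geq \log n$, run the $\Oh(\sqrt{\log n})$-approximation of~\cite{AgarwalCMM05}: it returns an OCT of size at most $c\,\tau(G)\sqrt{\log n} \leq c\,\tau(G)\sqrt{k}$, where $\tau(G)$ is the minimum OCT size; if this output exceeds $c\,k\sqrt{k}$ then $\tau(G) > k$ and we answer \NO, and otherwise we have an OCT of size $\Oh(k^{3/2})$. If $k < \log n$, then $3^{k} \leq n^{\log_2 3}$, so the exact $\Oh(3^k mn)$ algorithm of Reed et al.~\cite{ReedSV04} runs in polynomial time and either outputs an optimal OCT of size at most $k$ or correctly reports \NO. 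No kernelization and no randomness are needed.
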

For any fixed $r$ and $\ell$, there is a finite forbidden set $\F_{r,\ell}$ for \absplitGraph{s}~\cite{Gyarfas98}. That is,  
a graph $G$ is an \absplitGraph{} if and only if $G$ does not contain any graph $H\in \F_{r,\ell}$ as an induced subgraph. The size of the largest forbidden graph is bounded by $f(r,\ell)$, $f$ being a function given in \cite{Gyarfas98}. Since $f(2,2)$ is a constant, it is possible to compute the forbidden set $\F_{r,\ell}$ in polynomial time: The forbidden graphs are of size at most $f(2,2)$. 
Since the class \abGraph{s} is a sub class of \absplitGraph{s}, each graph in $\F_{r,\ell}$ will not appear as an induced  subgraph in any \abGraph.  
Now we are ready to design a polynomial time approximation algorithm for \twotwopartization. 
\begin{theorem}
 \label{lemma:fpt_approx_22}
 There is an algorithm which takes a graph $G$ and an integer $k$ as input, runs in polynomial time and 
 outputs either a set $S$ of size $\Oh(k^{3/2})$ such that $G-S$ is a \twotwoGraph{} or concludes that 
 $(G,k)$ is a \NO{} instance of \twotwopartization. 
\end{theorem}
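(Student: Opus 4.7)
\smallskip
\noindent\textbf{Proof plan.} The plan is to combine the finite forbidden characterization of $(2,2)$-split graphs with the \OCT{} approximation of Proposition~\ref{prop:oct_algo} via a two-phase reduction. In the first phase I greedily compute a maximal packing $\mathcal{P}$ of vertex-disjoint induced subgraphs of $G$ isomorphic to some member of $\F_{2,2}$. Since each forbidden graph has at most $f(2,2)=\Oh(1)$ vertices, $\mathcal{P}$ is computable in polynomial time. Any $(2,2)$-deletion set must intersect every element of $\mathcal{P}$, so if $|\mathcal{P}|>k$ I output \NO. Otherwise I set $S_1=\bigcup_{H\in\mathcal{P}}V(H)$, so that $|S_1|=\Oh(k)$; by maximality, $G-S_1$ is $\F_{2,2}$-free, hence a $(2,2)$-split graph, and Lemma~\ref{lemma:polytime_split_reco} yields a $(2,2)$-split partition $(V_1,V_2)$ of $G-S_1$ in polynomial time. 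In particular $G[V_1]$ is triangle-free and $\overline{G[V_2]}$ is triangle-free.

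In the second phase I would prove the structural claim that, for any YES-instance with solution $S^*$ of size at most $k$, there is an OCT of $G[V_1]$ of size at most $k+4$, and symmetrically for $\overline{G[V_2]}$. Fix an \ICpartition{} $(P_I,P_C)$ of $G-S^*$ and restrict both partitions to the common ground set $V(G)\setminus(S_1\cup S^*)$: $(P_I\setminus S_1,P_C\setminus S_1)$ is an \ICpartition{} of $G-(S_1\cup S^*)$ and $(V_1\setminus S^*,V_2\setminus S^*)$ is a $(2,2)$-split partition of the same graph. Applying Lemma~\ref{lemma:rel_split_ICpartition} with $r=\ell=2$ gives $|(V_1\setminus S^*)\cap P_C|\leq 4$, and symmetrically $|(V_2\setminus S^*)\cap P_I|\leq 4$. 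Hence the set $(V_1\cap S^*)\cup\bigl((V_1\setminus S^*)\cap P_C\bigr)$ has size at most $k+4$ and its removal from $V_1$ leaves a subset of $P_I$, which is bipartite; this is the required OCT of $G[V_1]$.

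In the final phase I invoke Proposition~\ref{prop:oct_algo} on $G[V_1]$ and on $\overline{G[V_2]}$, each with parameter $k+4$. If either call certifies that no OCT of size $k+4$ exists, the contrapositive of the structural claim allows me to return \NO. Otherwise I obtain sets $U_1\subseteq V_1$ and $U_2\subseteq V_2$ of sizes $\Oh((k+4)^{3/2})=\Oh(k^{3/2})$ such that $G[V_1\setminus U_1]$ is bipartite and $\overline{G[V_2\setminus U_2]}$ is bipartite. Outputting $S=S_1\cup U_1\cup U_2$ gives $|S|=\Oh(k)+\Oh(k^{3/2})=\Oh(k^{3/2})$, and $(V_1\setminus U_1,V_2\setminus U_2)$ is a valid \ICpartition{} of $G-S$. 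Every step runs in polynomial time.

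\textbf{Main obstacle.} The delicate ingredient is the structural claim in phase two: I must argue that the split partition produced obliviously by Lemma~\ref{lemma:polytime_split_reco} cannot disagree with the \ICpartition{} of any hypothetical solution by more than $\Oh(1)$ vertices on each side (outside $S^*$). This is precisely what Lemma~\ref{lemma:rel_split_ICpartition} buys us, but one has to be careful to restrict both partitions to the common vertex set $V(G)\setminus(S_1\cup S^*)$ before invoking it, and to ensure that the bound $k+4$ is the correct parameter to feed into Proposition~\ref{prop:oct_algo} so that a \NO-answer from the OCT algorithm actually rules out a $(2,2)$-deletion set of size $k$.
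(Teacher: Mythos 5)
Your proof is correct and follows the same skeleton as the paper's: pack vertex-disjoint copies of graphs from $\F_{2,2}$ to reduce to a \twotwosplitGraph{}, extract a split partition via Lemma~\ref{lemma:polytime_split_reco}, use Lemma~\ref{lemma:rel_split_ICpartition} to bound the disagreement with the \ICpartition{} of a hypothetical solution by $4$ on each side, and finish with two calls to the \OCT{} approximation of Proposition~\ref{prop:oct_algo}. The one place where you genuinely diverge is the handling of those at most $4$ crossing vertices per side. The paper \emph{guesses} the sets $U=A\cap B'$ and $W=A'\cap B$ over all $\Oh(n^8)$ choices, moves them to the correct side of the partition, and then looks for OCTs of combined size $k$; you instead leave the split partition untouched and absorb the misplaced vertices into the OCT budget, running each \OCT{} call with parameter $k+4$. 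Your variant is simpler (no polynomial enumeration of guesses) and yields the same $\Oh(k^{3/2})$ guarantee, since $\Oh((k+4)^{3/2})=\Oh(k^{3/2})$ and the contrapositive of your structural claim correctly licenses the \NO{} answer. You are also more careful than the paper on one technical point: you explicitly restrict both partitions to the common vertex set $V(G)\setminus(S_1\cup S^*)$ before invoking Lemma~\ref{lemma:rel_split_ICpartition}, which is needed because that lemma compares two partitions of the \emph{same} graph; the paper applies it to an \ICpartition{} of $G'-S^*$ against a split partition of $G'$ and leaves this restriction implicit.
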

\begin{proof}
 The algorithm first finds a maximal set ${\cal T}$ of vertex disjoint subgraphs of $G$ such that each subgraph in ${\cal T}$ 
 is isomorphic to a graph in $\F_{2,2}$. If $\vert {\cal T}\vert>k$, then clearly $(G,k)$ is a \NO{} instance of 
 \twoonepartization. So the algorithm will output \NO{} if $\vert {\cal T}\vert>k$. 
 Now consider the graph $G'=G-V({\cal T})$. Here, $V({\cal T})$ denotes the set of vertices appearing in graphs in $\cal T$. 
 Since $\cal T$ is a maximal set of vertex disjoint subgraphs in $G$ which are  
 isomorphic to a graphs in $\F_{2,2}$ we have that $G'$ is a \twotwosplitGraph. 
 
 Now our algorithm will find a set $S\subseteq V(G')$ of size bounded by $\Oh(k^{3/2})$ such that 
 $G'-S$ is a \twotwoGraph. 
 Since $G'$ is a subgraph of $G$, if $(G,k)$ is a \YES{} instance of \twoonepartization, then 
 $(G',k)$ is also a \YES{} instance. Let $S^*$ be an hypothetical solution of the instance $(G',k)$ of \twoonepartization{} and let $(A,B)$ be an \ICpartition{} of 
 $G'-S^*$. Now our algorithm applies Lemma~\ref{lemma:polytime_split_reco} on graph $G'$ and computes 
 a \twotwosplitPartition{} $(A',B')$ of $G'$ in polynomial time. 
 By Lemma~\ref{lemma:rel_split_ICpartition}, we know that $\vert A \cap B'\vert \leq 4$ and 
 $\vert A' \cap B\vert \leq 4$. So the algorithm will guess the set $U= A \cap B'$ and 
 $W=A' \cap B$. The number of possible guesses for $U$ and $W$ is bounded by $n^8$. 
 For the correct guess $U$ and $W$, we know that $A=(A'\cup U)\setminus (W\cup S^*)$ and $B=(B'\cup W)\setminus (U\cup S^*)$.
 Now consider the partition $(V_1,V_2)$ of $V(G')$, where $V_1=(A'\cup U)\setminus W$ and $V_2=(B'\cup W)\setminus U$. 
 So for the correct guess $U$ and $W$, we know that each vertex in $V_1$ either belongs to $A$ or belongs to $S^*$ 
 and each vertex in $V_2$ either belongs to $B$ or belongs to $S^*$. Now to compute a solution for $(G',k)$, it is enough to find an OCT 
 $S_1$ in $G[V_1]$ and an OCT $S_2$ in the complement graph of $G'[V_2]$ such that $\vert S_1\vert + \vert S_2\vert =k$. 
  Our algorithm applies Proposition~\ref{prop:oct_algo} on $G'[V_1]$ and on the complement graph of $G'[V_2]$. 
 If these algorithms output an OCT $S_1$ and an OCT $S_2$ for graphs $G'[V_1]$ and $\overline{G'}[V_2]$, 
 then $S_1\cup S_2$ is of size bounded by $\Oh(k^{3/2})$ and $G'-(S_1\cup S_2)$ is a \twotwoGraph. 
 Since $G'=G-V({\cal{T}})$ and $G'-(S_1\cup S_2)$ is a \twotwoGraph, we have that $G-(S_1\cup S_2\cup V({\cal T}))$ 
 is a \twotwoGraph. So our algorithm will output $S_1\cup S_2\cup V({\cal{T}})$ as the required output. 
 Since $|V({\cal{T}})|\leq k\cdot f(2,2)$, we have that $\vert S_1\cup S_2\cup V({\cal{T}})\vert = \Oh(k^{3/2})$. 
 If the algorithm mentioned in Proposition~\ref{prop:oct_algo} returns \NO{}  
 for all possible guesses of $U$ and $W$, then our algorithm outputs \NO. 
 It is easy to see that the number of steps in our algorithm is bounded by a polynomial in $\vert V(G)\vert$. \qed
\end{proof}

Using the arguments of Theorem~\ref{lemma:fpt_approx_22}, we can also design an approximation algorithm for finding a minimum \twotwoDeletion{} of a graph $G$. Let $S$ be an optimum \twotwoDeletion{} and  $(A,B)$ be the corresponding \ICpartition{} of 
 $G'=G-S$. Let ${\cal T}$ be a maximal set of vertex disjoint subgraphs of $G$, that are each isomorphic to a graph in $\F_{2,2}$. The number of subgraphs in ${\cal T}$ is at most $\vert S\vert$ and the number of vertices involved in these forbidden subgraphs is at most $f(2,2)\vert S\vert$. The remaining graph $G'$ is a \twotwosplitGraph{} and using 
 Lemma~\ref{lemma:polytime_split_reco}, we can find  a \twotwosplitPartition\ $(A',B')$ of $G'$. Let $(\hat{A},\hat{B})$ be the restriction of $(A,B)$ to $G'$. As argued above, at most $4$ vertices from $A'$ could be part of $\hat{B}$. Let this set of $4$ vertices be called $U$. The rest either belong to $\hat{A}$ or $S$. $U \cup (S\cap A')$ is an OCT for $A'$, of size at most $2\vert S\cap A'\vert$. The algorithm of~\cite{AgarwalCMM05} returns an $\Oh(\sqrt{\log{n}})$-approximate \OCT{} solution $S_1$ for $G[A']$, which has to be of size at most $2\vert S\cap A'\vert\cdot \Oh(\sqrt{\log{n}})$. 
There is a similar property on the vertices of $B'$. Applying the algorithm of~\cite{AgarwalCMM05}, on $\overline{G'[B']}$, returns an $\Oh(\sqrt{\log{n}})$-approximate \OCT{} solution $S_2$, which has to be of size at most $2\vert S\cap B'\vert\cdot \Oh(\sqrt{\log{n}})$.
Thus $V({\cal{T}}) \cup S_1\cup S_2$ is a \twotwoDeletion{} of $G$, with size at most $(f(2,2)+ \Oh(\sqrt{\log{n}})\vert S \vert$. This together with Lemma~\ref{twoone_twotwo} and discussion after that lead to the following theorem.

\begin{theorem}
\twoonepartization, \onetwopartization, and 
\twotwopartization\ admit polynomial time approximation algorithms with factor $\Oh(\sqrt{\log{n}})$. 
\end{theorem}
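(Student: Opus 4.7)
The plan is to essentially formalize the proof sketch already given in the paragraph preceding the theorem, with care about how to handle the fact that the optimum IC-partition is unknown. I will focus on \twotwopartization, since both \twoonepartization and \onetwopartization follow by the approximation-preserving reductions: Lemma~\ref{twoone_twotwo} turns \twoonepartization on $G$ into \twotwopartization on $G \uplus \hat{C}$, and \onetwopartization on $G$ is exactly \twoonepartization on $\overline{G}$.

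First, I would compute a maximal collection $\mathcal{T}$ of vertex-disjoint induced subgraphs of $G$, each isomorphic to some graph in the (constant-sized, polynomial-time computable) forbidden family $\F_{2,2}$. Since no optimum \twotwoDeletion can avoid hitting each such subgraph and the members of $\mathcal{T}$ are vertex-disjoint, we get $|\mathcal{T}| \leq |S^\star|$, where $S^\star$ is an optimum solution, and so $|V(\mathcal{T})| \leq f(2,2) \cdot |S^\star|$. The graph $G' = G - V(\mathcal{T})$ is then \twotwosplitGraph-free of any member of $\F_{2,2}$ as an induced subgraph and is therefore a \twotwosplitGraph; by Lemma~\ref{lemma:polytime_split_reco} we compute, in polynomial time, a \twotwosplitPartition $(A',B')$ of $G'$.

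Next, consider (hypothetically) the restriction $(\hat{A},\hat{B})$ of the \ICpartition\ of $G' - S^\star$ induced by $S^\star$ to $G'$. By Lemma~\ref{lemma:rel_split_ICpartition} with $r=\ell=2$, the sets $U^\star = A' \cap \hat{B}$ and $W^\star = B' \cap \hat{A}$ each have size at most $4$. I would guess $U \subseteq A'$ and $W \subseteq B'$ of size at most $4$ each, giving $\Oh(n^{8})$ candidates. For the correct guess $(U,W) = (U^\star,W^\star)$, setting $V_1 = (A' \cup W) \setminus U$ and $V_2 = (B' \cup U) \setminus W$, every vertex of $V_1$ is either in $\hat{A}$ or in $S^\star$, and every vertex of $V_2$ is either in $\hat{B}$ or in $S^\star$. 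Thus $(S^\star \cap V_1) \cup U$ is an \OCT\ of $G'[V_1]$ of size at most $|S^\star \cap V_1| + 4$, and symmetrically $(S^\star \cap V_2) \cup W$ is an \OCT\ of $\overline{G'[V_2]}$ of size at most $|S^\star \cap V_2| + 4$.

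Then I would invoke the $\Oh(\sqrt{\log n})$-approximation of Agarwal, Charikar, Makarychev and Makarychev~\cite{AgarwalCMM05} on $G'[V_1]$ and on $\overline{G'[V_2]}$ to obtain \OCT\ solutions $S_1, S_2$ of sizes at most $\Oh(\sqrt{\log n}) \cdot (|S^\star \cap V_1| + 4)$ and $\Oh(\sqrt{\log n}) \cdot (|S^\star \cap V_2| + 4)$ respectively; then $(V_1 \setminus S_1, V_2 \setminus S_2)$ is an \ICpartition\ of $G' - (S_1 \cup S_2)$, so $V(\mathcal{T}) \cup S_1 \cup S_2$ is a valid \twotwoDeletion\ of $G$ of total size $(f(2,2) + \Oh(\sqrt{\log n})) \cdot |S^\star| = \Oh(\sqrt{\log n}) \cdot |S^\star|$. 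The algorithm tries all polynomially many guesses $(U,W)$ and outputs the smallest valid set produced.

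The main obstacle, and the only place where any real care is needed, is making the guessing step robust: for each of the $\Oh(n^8)$ guesses we get \emph{some} output, but only the correct guess is guaranteed to yield the $\Oh(\sqrt{\log n})$-approximation bound. Since every output we keep is a genuine \twotwoDeletion, taking the minimum over all guesses is safe and the correct guess dominates the analysis, giving the claimed factor. Applying Lemma~\ref{twoone_twotwo} and the complementation trick transfers this guarantee to \twoonepartization and \onetwopartization, completing the proof.
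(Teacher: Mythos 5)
Your proof is correct and follows essentially the same route as the paper's: pack vertex-disjoint $\F_{2,2}$-obstructions, compute a $(2,2)$-split partition $(A',B')$ of the remainder, run the $\Oh(\sqrt{\log n})$ \OCT{} approximation of~\cite{AgarwalCMM05} on the two sides, and transfer the result to the other two problems via Lemma~\ref{twoone_twotwo} and complementation. The one difference is your $\Oh(n^8)$ guessing of $U$ and $W$, which the paper avoids by observing that $U\cup(S^\star\cap A')$ is already an \OCT{} of $G'[A']$ of size at most $|S^\star\cap A'|+4$, so the approximation can be run directly on $G'[A']$ and $\overline{G'[B']}$ with no guessing; your extra enumeration is harmless (every candidate output is a genuine \twotwoDeletion, so taking the minimum is safe) but unnecessary.
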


\section{Turing Kernels for Vertex Deletion to \abGraph{s}}
 In this section, we give a randomized Turing kernel for \twotwopartization\  (See introduction for the definition). 
 The equivalence in Lemma~\ref{twoone_twotwo} ensures that there is a randomized Turing kernel for \twoonepartization. Since, \onetwopartization{} on an instance $(G,k)$ is equivalent
to \twoonepartization{} on $(\overline{G},k)$, a randomized Turing kernel for \onetwopartization{} follows.
 
We have seen in Section~\ref{sec:verdelabg} that eventually the algorithm for \twotwopartization{} runs two instances of {\sc OCT}.  
In this section we explain that we can use the kernelization of {\sc OCT} to get a Turing kernel for \twotwopartization. 
A randomized polynomial kernel for {\sc OCT} was shown by Kratsch and Wahlstr\"{o}m~\cite{DBLP:conf/focs/KratschW12},  using the concept of representative family. They showed that 
it is possible to find $k^{\Oh(1)}$  ``relevant'' vertices from the input graph which contains the 
optimum solution. This leads to a randomized kernel for  {\sc OCT}. In fact, the following lemma follows from the work of Kratsch and Wahlstr\"{o}m. We sketch a proof in the appendix. 

\begin{lemma}
 \label{lemma:oct_strong}
 Let $G$ be a graph and $X$ be an OCT of $G$. There is a randomized polynomial time algorithm which computes a set $Z\subseteq V(G)- X$ 
 of size $\Oh(\vert X \vert^3 )$ such that 
 for any $Y\subseteq X$, a minimum sized OCT not containing $X$, of $G-Y$, is fully contained in $Z$. 
\end{lemma}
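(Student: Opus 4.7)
The plan is to follow the representative-set route to the randomized OCT kernel of Kratsch and Wahlström. The first step is to reduce finding a minimum OCT of $G-Y$ that avoids $X$ to a minimum vertex cut problem in an auxiliary bipartite graph $H$. Since $X$ is an OCT of $G$, the graph $G-X$ is bipartite; fix any proper $2$-colouring and let $(A,B)$ be its colour classes. Construct $H$ on the vertex set $A \cup B \cup \{x^A, x^B : x \in X\}$ by keeping every edge of $G-X$ and replacing each edge from $x \in X$ to a vertex $u$ of $G$ by the two edges $x^A u$ and $x^B u$, where $x^A$ is placed on the $A$-side of $H$ and $x^B$ on the $B$-side. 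Standard folklore (used already by Reed--Smith--Vetta in their iterative-compression OCT algorithm) then shows that, for every partition $X \setminus Y = X_A \uplus X_B$ playing the role of a guessed side-assignment, a set $S \subseteq V(G) \setminus X$ is an OCT of $G-Y$ consistent with this side-assignment if and only if $S$ separates the terminal set $T_1 = \{x^A : x \in X_A\} \cup \{x^B : x \in X_B\}$ from $T_2 = \{x^A : x \in X_B\} \cup \{x^B : x \in X_A\}$ in $H$.

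The second step is to bound, by a polynomial in $|X|$, the union of minimum $(T_1, T_2)$-vertex cuts in $H$ as the terminal pair $(T_1, T_2)$ ranges over all such choices (of which there are $3^{|X|}$ in total, one per triple $(Y, X_A, X_B)$). For this I invoke the cut-covering machinery of Kratsch and Wahlström: equip $V(H)$ with the gammoid whose independent sets correspond to vertex-disjoint paths from the auxiliary terminals, take a linear representation of this gammoid over a sufficiently large finite field (which exists with high probability via the Lovász/Marx construction), and then apply the representative-set theorem to compress its ground set. This produces a set $Z \subseteq V(G) \setminus X$ of size $\Oh(|X|^3)$ that, for every choice of source and sink subsets among the $2|X|$ terminals $\{x^A, x^B : x \in X\}$, contains at least one minimum vertex cut between them. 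The cubic dependence arises from combining the representative-set bound on a matroid of rank $\Oh(|X|)$ with the fact that every feasible cut has size $\Oh(|X|)$.

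Combining the two steps: for each $Y \subseteq X$ and each consistent side-assignment of $X \setminus Y$, the minimum OCT of $G-Y$ disjoint from $X$ equals, up to choice of optimum, a minimum $(T_1,T_2)$-vertex cut in $H$; by the cut-covering property at least one such minimum lies inside $Z$. Picking this particular optimum as our witness gives the lemma. Randomisation enters only through the sampling of a matrix representing the gammoid, and the whole construction runs in polynomial time.

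The main obstacle is ensuring that one set $Z$ simultaneously handles \emph{every} choice of $(Y, X_A, X_B)$, rather than just one terminal pair at a time; this universality is precisely the content of Kratsch--Wahlström's cut-covering lemma, so the bulk of the work reduces to translating OCT into the correct terminal cut problem, verifying that the $3^{|X|}$ terminal configurations all fit into the single representative-set bound, and bookkeeping the sizes to obtain $|Z| = \Oh(|X|^3)$.
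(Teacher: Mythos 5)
Your proposal follows essentially the same route as the paper's own (sketched) proof: reduce ``minimum OCT of $G-Y$ avoiding $X$'' to a minimum vertex-separator problem between terminal sets in the Reed--Smith--Vetta auxiliary graph with two copies $x^A,x^B$ of each $x\in X$, and then invoke the Kratsch--Wahlstr\"{o}m gammoid/representative-set cut-covering lemma to get a single set $Z$ of size $\Oh(|X|^3)$ that works uniformly over all $3^{|X|}$ terminal configurations. The only nit is the phrasing of the auxiliary-graph edges (literally adding both $x^Au$ and $x^Bu$ for every neighbour $u$ would make the two copies indistinguishable; what is intended, and what the paper does, is to join $x^A$ only to $N(x)$ on the $B$-side and $x^B$ only to $N(x)$ on the $A$-side), which your stipulation that $x^A$ and $x^B$ sit on opposite sides of the bipartition already implicitly enforces.
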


Now we are ready to explain our Turing kernel for \twotwopartization{} using Lemma~\ref{lemma:oct_strong}. 
Given an instance $(G,k)$ of \twotwopartization{}, first we construct  $\vert V(G)\vert^{\Oh(1)}$ many instances of a problem which is in \NP{} and each of them have size 
bounded by polynomial in $k$. Then, by using the Cook-Levin theorem~\cite{Cook:1971:CTP:800157.805047}, we can reduce each of these intances 
to instances of \twotwopartization{} and thus arrive at a Turing kernelization for \twotwopartization. 
We first run the polynomial time approximation algorithm described in Theorem~\ref{lemma:fpt_approx_22}. 
If the approximation algorithm outputs \NO{}, then the algorithm will output a trivial \NO{} instance 
of the problem. Otherwise let $X$ be the solution returned by the approximation algorithm on input $(G,k)$. We know that 
the cardinality of $X$ is bounded by $\Oh(k^{3/2})$. Now we fix an \ICpartition{} $(P_I,P_C)$ of $G-X$. 
Let $S$ be a {\em hypothetical} solution of size at most $k$ and $(Q_I,Q_C)$ be an \ICpartition{} of $G-S$. 
It follows from Observation~\ref{ab_intersection} that 
$\vert P_I\cap Q_C\vert\leq 4$ and $\vert Q_I\cap P_C\vert\leq 4$. 
This observation leads to the following lemma.  
\begin{lemma}\label{22_equivalence1}
 $(G,k)$ is a \YES{} instance of \twotwopartization{} if and only if there exist $V_C\subseteq P_I$ and $V_I\subseteq P_C$, each of cardinality at
most $4$ 
 such that $X'=X\cup V_C\cup V_I $ can be partitioned into $X'_I,X'_D,X'_C$, with the following properties:
\begin{enumerate}

\item There is a set $Z_I \subseteq P_I\setminus V_C$ such that $Z_I \cup X'_D \cup X'_C$ is an OCT for $G[P_I\cup X']$. 
In other words, $Z_I$ is an OCT for $G[P_I\cup X_I']$.
 \item 
There is a set $Z_C \subseteq P_C \setminus V_I$ such that $Z_C \cup X'_D \cup X'_I$ is an OCT for $\overline{G}[P_C
\cup X']$. In other words, $Z_C$ is an OCT for $\overline{G}[P_C \cup X'_C]$. 
\item $\vert Z_I\cup Z_C\cup X'_D\vert\leq k$.  
\end{enumerate}
 \end{lemma}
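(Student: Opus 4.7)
The plan is to establish the equivalence by tracking how a hypothetical solution interacts with the approximation's IC-partition $(P_I,P_C)$ of $G-X$. For the forward direction, assume a solution $S$ with $|S|\le k$ and an IC-partition $(Q_I,Q_C)$ of $G-S$ exist. I would set $V_C := P_I\cap Q_C$ and $V_I := P_C\cap Q_I$, define $X' := X\cup V_C\cup V_I$, and partition $X'$ according to where $S$ sends each vertex: $X'_I := X'\cap Q_I$, $X'_C := X'\cap Q_C$, $X'_D := X'\cap S$. The OCT-sets are $Z_I := P_I\cap S$ and $Z_C := P_C\cap S$.

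The size bounds $|V_C|,|V_I|\le 4$ need a careful application of Observation \ref{ab_intersection}: since that observation compares two IC-partitions of a single graph, I would apply it to the common subgraph $G-(S\cup X)$, on which $(P_I\setminus S,P_C\setminus S)$ and $(Q_I\setminus X,Q_C\setminus X)$ are both IC-partitions; then $V_C=(P_I\setminus S)\cap(Q_C\setminus X)$ because $V_C$ is automatically disjoint from $S\cup X$. The constraint $Z_I\subseteq P_I\setminus V_C$ follows from $V_C\cap S=\emptyset$. Properties 1 and 2 reduce to a direct case analysis on which of $Q_I,Q_C,S$ each vertex of $P_I\cup X'$ lies in, showing that removing $Z_I\cup X'_D\cup X'_C$ leaves exactly $(P_I\cup X')\cap Q_I\subseteq Q_I$, which is bipartite (property 2 is symmetric via the complement). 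Property 3 uses the disjointness of $Z_I,Z_C,X'_D$ together with $V_C,V_I$ being disjoint from $S$, giving $|Z_I\cup Z_C\cup X'_D|=|S|\le k$.

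For the reverse direction, given the guessed sets I would set $S := Z_I\cup Z_C\cup X'_D$ (so $|S|\le k$ by property 3) and define the candidate IC-partition of $G-S$ by $Q_I := (P_I\cup X')\setminus(Z_I\cup X'_D\cup X'_C)$ and $Q_C := (P_C\cup X')\setminus(Z_C\cup X'_D\cup X'_I)$. Properties 1 and 2 directly give that $G[Q_I]$ is bipartite and $\overline{G}[Q_C]$ is bipartite, yielding the required partition of $V(G)\setminus S$ into two independent sets and two cliques. What remains is the set-theoretic check that $(Q_I,Q_C,S)$ is a partition of $V(G)$; this reduces to a case analysis using $V(G)=X\cup P_I\cup P_C$ (disjoint), $V_C\subseteq P_I$, $V_I\subseteq P_C$, and the fact that $\{X'_I,X'_C,X'_D\}$ partitions $X'$.

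The main obstacle I expect is the bookkeeping: because $V_C,V_I$ lie simultaneously in $P_I\cup P_C$ and in $X'$, and because $\{X'_I,X'_C,X'_D\}$ may slice $V_C,V_I$ arbitrarily, verifying coverage and disjointness in the partition of $V(G)$ requires careful cross-checking of each possible membership pattern. The subtlest single step is the correct invocation of Observation \ref{ab_intersection}, which forces one to pass to the common subgraph $G-(S\cup X)$ rather than invoking the observation naively on $G$ itself.
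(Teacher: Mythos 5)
Your proposal is correct and follows essentially the same route as the paper: the same choice of $V_C=P_I\cap Q_C$, $V_I=P_C\cap Q_I$, the same partition of $X'$ by membership in $Q_I$, $Q_C$, $S$, and the same reconstruction of $(Q_I,Q_C)$ in the converse direction. Your extra care in applying Observation~\ref{ab_intersection} to the common subgraph $G-(S\cup X)$ is a sound (and slightly more explicit) justification of a step the paper leaves implicit.
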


 \begin{proof}
  Suppose $(G,k)$ is a \YES{} instance of \twotwopartization. Then there is a $k$-sized solution $Z$ 
  such that $G-Z$ is a \twotwoGraph. Let 
  $(Q_I,Q_C)$ be an \ICpartition{} of $G-Z$. Let $V_C = P_I \cap Q_C$ and $V_I = P_C \cap Q_I$. 
 It follows from Observation~\ref{ab_intersection} 
 that that $\vert V_I\vert\leq 4$ and $\vert V_C\vert\leq 4$.  Notice that any vertex in $P_I
\setminus V_C$ either belongs to $Q_I$ or to $Z$. Similarly, any vertex in $P_C \setminus V_I$ either belongs to 
$Q_C$ or to $Z$. Let $X'=X\cup V_I\cup V_C$. Now we define $X_I'=X'\cap Q_I$, $X_C'=X'\cap Q_C$ and $X_D'=X'\cap Z$. 
Let $Z_I=Z\cap P_I$ and $Z_C=Z\cap P_C$. Note that $Z_I\cap V_C=\emptyset$ and $Z_C\cap V_I=\emptyset$.
From the definition of $X'$, $V_I$ and $V_C$, it is clear that $V_I\subseteq X_I'$ and $V_C\subseteq X_C'$. 
Since $V_C\subseteq X_I'$ and $V_C\subseteq X_C'$, we have that $(P_I\cup X')\setminus (Z_I\cup X_D'\cup X_C')=Q_I$. Also since, 
$G[Q_I]$ is a bipartite graph we have that $(Z_I\cup X_D'\cup X_C')$ is an OCT of $G[P_I\cup X']$. 
By similar arguments we can show that $(Z_C\cup X_D'\cup X_I')$ is an OCT of $\overline{G}[P_C\cup X']$. 
Since $Z_I\cup Z_C \cup X_D'=Z$ and $\vert Z\vert =k$, the set $Z_I\cup Z_C \cup X_D'$ satisfies condition 
3 in the lemma. This completes the proof of the forward direction. 

  
%
%
    
  Conversely, suppose there is a $V_C\subseteq P_I$ and $V_I \subseteq P_C$, each of size at most $4$ such that the $X' = X \cup V_I \cup V_C$ has a $3$-partition $(X'_I \cup X'_D
\cup X'_C)$ with the properties mentioned in the lemma. That is, there is an OCT $Z_I$ for the graph $G[P_I\cup X_I']$ and an OCT $Z_C$ 
for the graph $\overline{G}[P_C\cup X'_C]$ such that $\vert Z_I\cup Z_C \cup X_D'\vert \leq k$. Then we claim that $Z=Z_I\cup Z_C \cup X_D'$ is 
a \twotwoDeletion{} of $G$. Consider the sets $Q_I=(P_I\cup X_I')\setminus Z_I$ and $Q_C=(P_C\cup X_C')\setminus Z_C$. By our assumption 
$G[Q_I]$ and $\overline{G}[Q_C]$ are bipartite graphs. Also note that $Q_I\cup Q_C\cup Z=V(G)$. Hence $Z$ is a \twotwoDeletion{} of $G$ 
and $(Q_I,Q_C)$ is an \ICpartition{} of $G-Z$. \qed
%
%
 \end{proof} 
The Lemma~\ref{22_equivalence1} allows us to reduce an instance of \twotwopartization{} to polynomially 
many instances of a problem which is in \NP{}. Consider the following problem.

\defparproblem{\sc Twin \OCT{} (TOCT)}{Two graphs $G_1$ and $G_2$, terminals $X\subseteq V(G_1)$, $Y\subseteq V(G_2)$, 
a bijection $\Phi$ between $X$ and $Y$, and an integer $k$}{$k$}{Is there a partition of $X$ into three parts 
$(X_1,X_D,X_2)$ such that there is an OCT $Z_1\subseteq V(G_1)\setminus X$ for the graph $G_1-(X_D\cup X_2)$, an OCT $Z_2\subseteq V(G_2)\setminus Y$ for the graph 
$G_2-(\Phi(X_D)\cup \Phi(X_1))$ and $\vert Z_1\cup X_D \cup Z_2\vert\leq k$.}


Clearly the problem \TOCT{} is in \NP. 
Because of Lemma~\ref{22_equivalence1}, for each $V_C\subseteq P_1$ and $V_I\subseteq P_C$ of cardinality at most $4$, we 
construct an instance of \TOCT, of size bounded by a polynomial in $k$, using Lemma~\ref{lemma:oct_strong}. 
After this, we fix a $V_I\subseteq P_C$ and a $V_C\subseteq P_I$, each of cardinality at most $4$. 
Now let $X'=X\cup V_I\cup V_C$.
Note that $X'$ is a \twotwoDeletion{} of $G$ and $(P_I\setminus V_C,P_C\setminus V_I)$ is an \ICpartition{}  
of $G-X'$. The following observation is derived from the fact that $(P_I\setminus V_C,P_C\setminus V_I)$ is an  
\ICpartition{} of $G-X'$ and $V_I\cup V_C\subseteq X'$. 

\begin{observation}\label{solutions_22}
The set $X'$ is an OCT of $G[P_I\cup X']$ and also an OCT of $\overline{G}[P_C\cup X']$. 
\end{observation}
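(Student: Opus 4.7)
The plan is to unpack the definitions and observe that removing $X'$ from $G[P_I \cup X']$ (respectively from $\overline{G}[P_C \cup X']$) leaves precisely one side of the IC-partition of $G - X'$ that was established immediately before the observation. First I would pin down the relevant set equalities. Because $X$ is a $(2,2)$-vertex deletion set of $G$, the pieces $P_I$ and $P_C$ lie in $V(G) \setminus X$ and are disjoint from each other; combined with $V_C \subseteq P_I$ and $V_I \subseteq P_C$, this gives $P_I \cap X' = V_C$ and $P_C \cap X' = V_I$. Consequently $(P_I \cup X') \setminus X' = P_I \setminus V_C$ and $(P_C \cup X') \setminus X' = P_C \setminus V_I$.

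Next I would invoke the fact stated just before the observation, namely that $(P_I \setminus V_C,\, P_C \setminus V_I)$ is an \ICpartition{} of $G - X'$. By definition of an \ICpartition, the set $P_I \setminus V_C$ can be split into two independent sets of $G$, so $G[P_I \setminus V_C]$ is bipartite. Using the set equality from the previous step, $G[P_I \cup X'] - X' = G[P_I \setminus V_C]$ is bipartite, which is exactly the assertion that $X'$ is an OCT of $G[P_I \cup X']$.

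For the second half I would run the symmetric argument in the complement. The set $P_C \setminus V_I$ can be partitioned into two cliques of $G$; since a clique in $G$ is an independent set in $\overline{G}$, the graph $\overline{G}[P_C \setminus V_I]$ splits into two independent sets and is therefore bipartite. Combined with $\overline{G}[P_C \cup X'] - X' = \overline{G}[P_C \setminus V_I]$, this yields that $X'$ is an OCT of $\overline{G}[P_C \cup X']$.

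I do not anticipate any real obstacle: the observation is essentially book-keeping once the two set identities $P_I \cap X' = V_C$ and $P_C \cap X' = V_I$ are made explicit. The only thing to be careful about is keeping straight which auxiliary set lives in which part—$V_C$ sits inside $P_I$ and $V_I$ inside $P_C$, contrary to what the subscripts might suggest at a glance.
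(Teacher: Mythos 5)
Your proof is correct and follows exactly the route the paper intends: the paper states that the observation ``is derived from the fact that $(P_I\setminus V_C,P_C\setminus V_I)$ is an \ICpartition{} of $G-X'$ and $V_I\cup V_C\subseteq X'$,'' and your argument simply spells out that derivation via the set identities $(P_I\cup X')\setminus X'=P_I\setminus V_C$ and $(P_C\cup X')\setminus X'=P_C\setminus V_I$. No gap; your version is just a more explicit writeup of the same bookkeeping.
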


For a particular choice of $V_C\subseteq P_I$ and $V_I\subseteq P_C$ of cardinality at most $4$, 
we construct an instance of \TOCT{} as follows. Let $X'=X\cup V_I \cup C_C$, where $X$ is the 
approximate solution of size bounded by $\Oh(k^{3/2})$. Let $(P_I,P_C)$ be an 
\ICpartition{} of $G-X$. Let $G_1=G[P_I\cup X']$ and $G_2=\overline{G}[P_C\cup X']$. 
By Observation~\ref{solutions_22}, $X'$ is an OCT in graphs $G_1$ and $G_2$. 
Now we apply Lemma~\ref{lemma:oct_strong} and get a set of relevant vertices $Z_1\subseteq V(G_1)\setminus X'$  
of size bounded by $\Oh(k^{9/2})$. Next, we construct a graph $G_1^*$ as follows: 
delete all the vertices $V(G_1)\setminus (X'\cup Z_1)$ from $G_1$. Add two length (three length) path between two 
vertices in $V(G_1^*)$, if there is an even length (odd length) path between the corresponding vertices in $G_1$ 
using only vertices from $V(G)\setminus (X'\cup Z_1)$. 
Similarly, we construct a graph $G_2^*$ from $G_2$. 
Now we output $H=(G_1,G_2,X',X',k)$ as the reduced intance of \TOCT, with the bijection between $X'$ and $X'$
 be the natural identify map.  
Since there are $\Oh(n^4)$ choices for selecting $V_C$ and $V_I$, 
our algorithm will output instances $H_1,H_2,\ldots H_t$ where $t=\Oh(n^4)$ and the size of each $H_i$ is 
bounded by $\Oh(k^{9})$. 

Using Lemmata~\ref{lemma:oct_strong} and ~\ref{22_equivalence1} 
we can prove that in fact the above Turing reduction is correct. 
\begin{lemma}\label{22_equivalence2}
 $(G,k)$ is a \YES{} instance of \twotwopartization{} if and only if there exists $i$ 
 such that $H_i$ is a \YES{} instance of \TOCT.  
%
\end{lemma}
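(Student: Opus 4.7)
The plan is to establish the equivalence by combining the structural characterization from Lemma~\ref{22_equivalence1} with the OCT preserving properties of the relevant vertex construction underlying Lemma~\ref{lemma:oct_strong}. Since the family $\{H_i\}$ ranges over all choices of $V_C \subseteq P_I$ and $V_I \subseteq P_C$ of cardinality at most $4$, and Lemma~\ref{22_equivalence1} characterizes \YES{} instances in terms of the existence of such a choice, it suffices to show that for each fixed pair $(V_C, V_I)$, the OCT conditions of Lemma~\ref{22_equivalence1} applied to $X' = X \cup V_I \cup V_C$ are equivalent to the corresponding $H_i$ being a \YES{} instance of \TOCT.

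For the forward direction, suppose $(G, k)$ is \YES. Apply Lemma~\ref{22_equivalence1} to obtain a pair $(V_C, V_I)$, a partition $X' = X_I' \cup X_D' \cup X_C'$, and OCTs $Z_I$ for $G[P_I \cup X_I']$ and $Z_C$ for $\overline{G}[P_C \cup X_C']$ with $|Z_I \cup X_D' \cup Z_C| \leq k$. Let $H_i$ be the instance built from this pair. Since $Z_I$ is an OCT of $G_1 - (X_D' \cup X_C')$ disjoint from $X'$, applying Lemma~\ref{lemma:oct_strong} to $G_1$ with OCT $X'$ and $Y = X_D' \cup X_C'$ produces an OCT $\tilde{Z}_1 \subseteq Z_1$ of $G_1 - (X_D' \cup X_C')$ with $|\tilde{Z}_1| \leq |Z_I|$. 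Analogously we obtain $\tilde{Z}_2 \subseteq Z_2$ of size at most $|Z_C|$ for $G_2 - (X_D' \cup X_I')$. The parity preserving construction of $G_1^*$ (length-$2$ paths replacing even walks through discarded vertices, length-$3$ paths replacing odd ones) ensures that $\tilde{Z}_1$ is also an OCT of $G_1^* - (X_D' \cup X_C')$, and likewise $\tilde{Z}_2$ for $G_2^* - (X_D' \cup X_I')$. Then $(X_I', X_D', X_C')$ together with $\tilde{Z}_1, \tilde{Z}_2$ certifies that $H_i$ is \YES{} for \TOCT.

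For the backward direction, suppose some $H_i$ is \YES{} for \TOCT{} via a partition $X' = X_1 \cup X_D \cup X_2$ and OCTs $Z_1^{\star} \subseteq V(G_1^*) \setminus X'$ and $Z_2^{\star} \subseteq V(G_2^*) \setminus X'$ satisfying the \TOCT{} conditions with $|Z_1^{\star} \cup X_D \cup Z_2^{\star}| \leq k$. Parity preservation again implies that $Z_1^{\star}$ is an OCT of $G_1 - (X_D \cup X_2)$ and $Z_2^{\star}$ is an OCT of $G_2 - (X_D \cup X_1)$. Setting $X_I' := X_1$, $X_D' := X_D$, $X_C' := X_2$, $Z_I := Z_1^{\star}$, $Z_C := Z_2^{\star}$, and reusing the same $(V_C, V_I)$ that generated $H_i$, all three hypotheses of Lemma~\ref{22_equivalence1} hold, so $(G, k)$ is \YES.

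The main obstacle is the bidirectional equivalence between OCTs of $G_1$ contained in $Z_1$ and OCTs of $G_1^*$ (and analogously for $G_2, G_2^*$). This is precisely the content of the Kratsch--Wahlstr\"om style parity gadget: every odd walk in $G_1$ through discarded vertices is encoded by a length-$3$ path in $G_1^*$ and every even walk by a length-$2$ path, so the set of odd cycles on the surviving vertices is preserved in both directions. Justifying this invariance carefully is the only nontrivial step; once it is accepted, the rest of the proof is routine bookkeeping that links the combinatorial partitioning of Lemma~\ref{22_equivalence1} to the relevant set property of Lemma~\ref{lemma:oct_strong}.
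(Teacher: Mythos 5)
Your proof follows essentially the same route as the paper: both directions go through Lemma~\ref{22_equivalence1} and transfer OCTs between $G_j$ and $G_j^*$ using Lemma~\ref{lemma:oct_strong} together with the parity-preserving path gadget. The one imprecision is in the backward direction, where you set $Z_I := Z_1^{\star}$ and assert it is itself an OCT of $G_1-(X_D\cup X_2)$ --- but $Z_1^{\star}$ may contain internal vertices of the added length-$2$/length-$3$ paths, which are not vertices of $G$ at all (and Lemma~\ref{22_equivalence1} requires $Z_I\subseteq P_I\setminus V_C$); the paper sidesteps this by invoking Lemma~\ref{lemma:oct_strong} to assert only the \emph{existence} of an OCT of $G_1-(X_D'\cup X_2')$ of size at most $\vert Z_I^*\vert$, which is all the converse of Lemma~\ref{22_equivalence1} needs.
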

\begin{proof}
 Let $(G,k)$ be a \YES{} instance. 
 Recall that $X$ is an approximate solution and $(P_I,P_C)$ is an 
 \ICpartition{} of $G-X$. 
By Lemma~\ref{22_equivalence1}, we know that there exists $V_C\subseteq P_I$ and $V_I\subseteq P_C$ 
such that the set $X'=X\cup V_I\cup V_C$ can be partitioned into $(X_I'\cup X_D'\cup X_C')$ with the 
following properties. 
\begin{itemize}
 \item there is set $Z_I\subseteq P_I\setminus V_C$ such that $Z_I$ is an OCT of $G[P_I\cup X_I']$. 
 \item there is set $Z_C\subseteq P_C\setminus V_I$ such that $Z_C$ is an OCT of $\overline{G}[P_C\cup X_C']$. 
 \item $\vert Z_I \cup Z_C \cup X_D' \vert \leq k$
\end{itemize}
In our reduction, we have constructed an instance $H_i$ corresponding to the sets $V_C$ and $V_I$. 
That is, $H_i$ is constructed from the graphs $G_1=G[P_I\cup X']$ and $G_2=G[P_C\cup X']$. In the 
construction of $H_i$, we first constructed $G_1^*$ from $G_1$ and $G_2^*$ from $\overline{G_2}$, by finding relevant 
vertices $Z_1$ 
and $Z_2$ in graph $G_1$ and $G_2$ respectively, using Lemma~\ref{lemma:oct_strong}.
Finally we consider the graph $H_i=(G_1^*,G_2^*,X',X',k)$. 
From the construction of 
$G_1^*$ and using Lemma~\ref{lemma:oct_strong}, we know that $G_1^*-(X_D'\cup X_C')$ has an OCT $Z_I^*$ of size at most 
$\vert Z_I\vert$, because $Z_I$ is an OCT in 
$G_1-(X_D'\cup X_C')$. Similarly $G_2^* -(X_D'\cup X_I')$ has an OCT $Z_C^*$ of size at most $\vert Z_C\vert$, 
because $Y_C$ is an OCT in 
$G_2-(X_D'\cup X_I')$. 
This implies that $\vert Y_I^*\cup Y_C^*\cup X_D'\vert \leq k$. 
Thus, $H_i$ is a \YES{} instance of \TOCT. 

In the converse direction, suppose there is an $i$ such that the instance $H_i$
is a \YES{} instance of \TOCT. Note that $H_i$ is constructed for 
a particular $V_C\subseteq P_I$ and $V_I\subseteq P_C$, each of cardinality at most $4$. 
Let $X'=X\cup V_C\cup V_I$. That is the instance $H_i=(G_1^*,G_2^*,X',X',k)$ 
where $G_1^*$ is constructed from $G_1=G[X'\cup P_I]$ and $G_2^*$ is constructed 
from $G_2=\overline{G}[P_C\cup X']$. By our assumption $H_i$ is a \YES{} instance 
of \TOCT. This implies that there is a partition of $X'$ into $(X_1',X_D',X_2')$, 
and that there exists an OCT $Z_I^*$ of $G_1^*-(X_D'\cup X_2')$ and an OCT $Z_C^*$ 
of $G_2*-(X_D'\cup X_1')$ such that $\vert Z_I^*\cup Z_C^*\cup X_D' \vert \leq k$. 
By Lemma~\ref{lemma:oct_strong}, there exists an OCT $Z_I$ of $G_1-(X_D'\cup X_2')$ 
of size at most $\vert Z_I^*\vert $
and an OCT $Z_C$ of $G_2-(X_D'\cup X_1')$ of size at most $\vert Z_C^*\vert $. 
Thus, $\vert Z_I \cup Z_C\cup X_D' \vert \leq k$ and the conditions in the 
Lemma~\ref{22_equivalence1} are met by the partition of $X'$ in to $(X_1',X_D',X_2')$. 
This implies that $(G,k)$ is a \YES{}  instance of \twotwopartization.  \qed
%
%
%
%
\end{proof}
%
The problem \TOCT{} 
is in \NP{} and \twoonepartization{} is \NP-complete. Therefore, by Cook-Levin theorem  each instance $H_i$ 
of \TOCT{} can be reduced to an an instance  of \twotwopartization{} in polynomial time. 
Also note that size of each instance $H_i$ is bounded by $\Oh(k^9)$. 
 Thus we have the following theorem. 
\begin{theorem}
\label{tm:turingkernel}
There is a randomized polynomial Turing kernel for \twotwopartization. 
\end{theorem}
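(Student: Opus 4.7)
My plan is to assemble the previously established lemmata into a Turing kernelization pipeline. Given an instance $(G,k)$, I first invoke the polynomial-time approximation algorithm from Theorem~\ref{lemma:fpt_approx_22}: if it returns \NO, I output a trivial \NO\ instance; otherwise I obtain an approximate \twotwoDeletion\ $X$ with $|X| = \Oh(k^{3/2})$ together with an \ICpartition\ $(P_I,P_C)$ of $G-X$ (computable in polynomial time by Lemma~\ref{lemma:boundpartition}). By Lemma~\ref{22_equivalence1}, to solve the instance it suffices to branch over all subsets $V_C \subseteq P_I$ and $V_I \subseteq P_C$ of cardinality at most $4$; there are only $\Oh(n^8)$ such pairs, and for each one I need only decide whether $X' := X \cup V_I \cup V_C$ admits a $3$-partition $(X'_I,X'_D,X'_C)$ witnessing the two OCT conditions and the budget $k$.

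Next, for each choice of $(V_C,V_I)$ I produce one small instance of the intermediate problem \TOCT. Setting $G_1 = G[P_I \cup X']$ and $G_2 = \overline{G}[P_C \cup X']$, Observation~\ref{solutions_22} tells me that $X'$ is an OCT in both $G_1$ and $G_2$. I apply Lemma~\ref{lemma:oct_strong} to each of $G_1$ and $G_2$ with the OCT $X'$ to extract representative vertex sets $Z_1 \subseteq V(G_1) \setminus X'$ and $Z_2 \subseteq V(G_2) \setminus X'$ of size $\Oh(|X'|^3) = \Oh(k^{9/2})$. I then compress $G_1$ down to a graph $G_1^\star$ on vertex set $X' \cup Z_1$ by replacing each pair of vertices with an even (respectively odd) path through $V(G_1) \setminus (X' \cup Z_1)$ by an explicit path of length $2$ (respectively $3$); $G_2^\star$ is built analogously from $G_2$. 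This preserves the parity of connecting paths and hence, by Lemma~\ref{lemma:oct_strong}, the minimum size of an OCT that avoids any chosen subset of $X'$. Outputting $H_i = (G_1^\star,G_2^\star,X',X',k)$ with the identity bijection yields a \TOCT\ instance of size $\Oh(k^9)$.

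The correctness of the resulting family $H_1,\ldots,H_t$ with $t = \Oh(n^8)$ is exactly Lemma~\ref{22_equivalence2}: $(G,k)$ is a \YES\ instance of \twotwopartization\ iff some $H_i$ is a \YES\ instance of \TOCT. To convert this Turing reduction into a Turing \emph{kernel} for \twotwopartization\ itself, I invoke the fact that \TOCT\ belongs to \NP\ (the three-way partition of $X'$ and the two OCTs can be verified in polynomial time) and \twotwopartization\ is \NP-hard (the recognition problem already subsumes \OCT). The Cook--Levin theorem therefore supplies, in polynomial time, a reduction from each $H_i$ of size $\Oh(k^9)$ to an instance $(G_i',k_i')$ of \twotwopartization\ of size polynomial in $k$. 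The overall algorithm runs in polynomial time, produces $\Oh(n^8)$ instances each of size $k^{\Oh(1)}$, and with high probability (inherited from the randomized representative-set construction of Lemma~\ref{lemma:oct_strong}) the original instance is a \YES\ instance precisely when at least one produced instance is.

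The only substantive obstacle is verifying that the representative-set compression from Lemma~\ref{lemma:oct_strong} indeed commutes with removing arbitrary subsets $Y \subseteq X'$: this is what allows the three-way guess $(X'_I,X'_D,X'_C)$ inside \TOCT\ to interact cleanly with the compressed graphs $G_1^\star, G_2^\star$. Everything else is bookkeeping: counting guesses, bounding kernel sizes, and appealing to Cook--Levin for the final cosmetic step that turns a Turing reduction to \TOCT\ into a Turing kernel for \twotwopartization.
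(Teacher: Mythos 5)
Your proposal is correct and follows essentially the same pipeline as the paper: approximation algorithm to get $X$, branching over $V_C,V_I$ via Lemma~\ref{22_equivalence1}, compression to \TOCT{} instances via Lemma~\ref{lemma:oct_strong} and Observation~\ref{solutions_22}, correctness via Lemma~\ref{22_equivalence2}, and Cook--Levin to return to \twotwopartization. The only (immaterial) difference is your count of $\Oh(n^8)$ guesses for the pair $(V_C,V_I)$ versus the paper's stated $\Oh(n^4)$; both are polynomial, so the kernelization goes through either way.
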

Since there is parameter preserving reduction from \twoonepartization{} and \onetwopartization{} 
to \twotwopartization, the following corollary is derived from Theorem~\ref{tm:turingkernel}. 

\begin{corollary}
 There is a randomized polynomial Turing kernel for \twoonepartization{}  
 and \onetwopartization.
 \end{corollary}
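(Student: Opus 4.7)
The plan is to derive the corollary by composing the parameter-preserving reductions already established earlier in the paper with Theorem~\ref{tm:turingkernel}. The whole argument is a chain of three steps, one of which is Cook--Levin bookkeeping.

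For \twoonepartization, given an input $(G,k)$, I would first apply the construction of Lemma~\ref{twoone_twotwo} to form $G' = G \uplus \hat{C}$, where $\hat{C}$ is a clique on $\vert V(G) \vert + 3$ fresh vertices. By that lemma, $(G,k)$ is a \YES{} instance of \twoonepartization{} iff $(G',k)$ is a \YES{} instance of \twotwopartization{}, and crucially the parameter $k$ is unchanged. Next I would invoke Theorem~\ref{tm:turingkernel} on $(G',k)$ to obtain, in randomized polynomial time, a collection $H_1, \ldots, H_t$ of $t = \vert V(G') \vert^{\Oh(1)}$ instances of \twotwopartization{}, each of size $k^{\Oh(1)}$, such that with high probability $(G',k)$ is \YES{} iff some $H_i$ is \YES{}. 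Finally, since \twoonepartization{} is \NP-complete and \twotwopartization{} lies in \NP{}, the Cook--Levin theorem converts each $H_i$ in polynomial time into an equivalent \twoonepartization{} instance $H_i'$ of size polynomial in $\vert H_i \vert$, hence of size $k^{\Oh(1)}$; the new parameter may simply be taken to equal $\vert V(H_i') \vert$. The collection $\{H_i'\}_i$ is then the required randomized polynomial Turing kernel.

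For \onetwopartization{} I would use the complementation observation from Section~\ref{sec:verdelabg}: $(G,k)$ is a \YES{} instance of \onetwopartization{} iff $(\overline{G},k)$ is a \YES{} instance of \twoonepartization{}, because a partition of $V(G)$ into one independent set and two cliques is exactly a partition of $V(\overline{G})$ into two independent sets and one clique. Given an input $(G,k)$, I would compute $\overline{G}$ in polynomial time, feed $(\overline{G},k)$ into the Turing kernel for \twoonepartization{} built in the previous paragraph, and then take the complement of each output instance; complementation is a polynomial-time operation that preserves both size and parameter and turns a \twoonepartization{} instance into a \onetwopartization{} instance, so this yields the Turing kernel for \onetwopartization{}.

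The only point that requires any care at all is verifying that the Cook--Levin step does not destroy the size bound $k^{\Oh(1)}$: but its input already has size polynomial in $k$ and it runs in polynomial time, so the output is also of size polynomial in $k$. I do not expect any genuine obstacle; the corollary really is a direct chaining of two parameter-preserving reductions with Theorem~\ref{tm:turingkernel}, and the proof should be short.
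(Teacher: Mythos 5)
Your proposal is correct and follows essentially the same route as the paper: reduce \twoonepartization{} to \twotwopartization{} via the disjoint-clique construction of Lemma~\ref{twoone_twotwo} (which preserves the parameter and only polynomially inflates the instance), apply Theorem~\ref{tm:turingkernel}, and handle \onetwopartization{} by complementation; the paper compresses all of this into one sentence, while you additionally spell out the Cook--Levin conversion back to the original problem, which is a harmless elaboration of what the paper already does for \twotwopartization{} itself.
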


\section{Edge deletion for \abGraph{s}}
%
In this section we show that \twooneedgepartization{} and \onetwoedgepartization{} are in \FPT.

\subsection{\twooneedgepartization} 
In this subsection we show that \twooneedgepartization{} is in \FPT, 
using iterative compression. 
%
For \twooneedgepartization{}, the corresponding compression problem is defined as follows. 
%
%

\defparproblemoutput{{\sc \twooneedgepartization{} Compression}}{A graph $G$ with $V(G)=V\cup \{v\}$, an integer $k$ and an edge set $S'\subseteq E(G-\{v\})$, of size at most $k$,
such that $G[V]- S'$ is a \twooneGraph}{$k$}{A subset $S \subseteq E$  of size at most $k$ such that $G- S$ is a \twooneGraph?}

Like in the case of \twotwopartization, we can show  that \twooneedgepartization{} can be solved,   
by running {\sc \twooneedgepartization{} Compression} at most $\vert V(G) \vert$ times, for an input instance $(G,k)$. 
The following lemma is useful for our purpose. 

\begin{lemma}
 \label{lem:small_clique}
 Let $G$ be a graph on $n$ vertices, $v\in V(G)$ and $\vert E(G- \{v\})\vert \leq k$. Then the number of cliques in $G$ 
 is bounded by $2^{\Oh(\sqrt{k})}n$  and these cliques can be enumerated in time $2^{\Oh(\sqrt{k})}n$. 
\end{lemma}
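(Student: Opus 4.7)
The plan is to split the cliques of $G$ according to whether they avoid $v$ or contain $v$, and in each case to exploit that the relevant host graph has at most $k$ edges, so it has degeneracy $\Oh(\sqrt{k})$.

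First I would establish the following standard fact: any graph $H$ with at most $k$ edges has degeneracy at most $\sqrt{2k}$. Indeed, in any subgraph $H'$ on $n'$ vertices there is a vertex of degree at most $2|E(H')|/n' \leq 2k/n'$. If $n' \geq \sqrt{2k}$ this is at most $\sqrt{2k}$, and otherwise all degrees in $H'$ are strictly less than $\sqrt{2k}$. In particular, $H$ admits an ordering $u_1,\dots,u_{n'}$ in which each vertex has at most $\sqrt{2k}$ later neighbours; call this set $N^+(u_i)$.

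Next I would use the classical degeneracy argument to count and enumerate cliques. For any clique $C$ in $H$, if $u_i$ is its smallest-indexed vertex then $C\setminus\{u_i\}\subseteq N^+(u_i)$, so $C$ is determined by $u_i$ together with a subset of $N^+(u_i)$. This gives at most $n'\cdot 2^{\sqrt{2k}}=n'\cdot 2^{\Oh(\sqrt{k})}$ cliques. For enumeration, I would compute a degeneracy ordering in linear time, then for each $u_i$ iterate over all subsets of $N^+(u_i)$ and test in $\Oh(k)$ time whether the subset together with $u_i$ forms a clique; the total running time is $2^{\Oh(\sqrt{k})}n'$.

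Finally, I would apply this twice. Cliques of $G$ not containing $v$ are exactly the cliques of $G-\{v\}$, which has at most $k$ edges, so by the above there are at most $(n-1)\cdot 2^{\Oh(\sqrt{k})}$ of them and they can be listed in time $2^{\Oh(\sqrt{k})}n$. Cliques of $G$ containing $v$ are of the form $\{v\}\cup C'$, where $C'$ is a clique in $G[N(v)]$; since $G[N(v)]$ is an induced subgraph of $G-\{v\}$, it also has at most $k$ edges, so the same bound gives at most $n\cdot 2^{\Oh(\sqrt{k})}$ such $C'$, enumerable in the same time. Summing yields the claimed bound of $2^{\Oh(\sqrt{k})}n$ on both the number of cliques and the enumeration time. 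There is no serious obstacle here; the proof is essentially the combination of the edges-to-degeneracy bound with the standard degeneracy-based clique enumeration.
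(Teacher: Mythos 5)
Your proof is correct, and it reaches the same bound by a somewhat different and more self-contained route. Both arguments hinge on the observation that a graph with at most $k$ edges has degeneracy $\Oh(\sqrt{k})$, but from there the paper proceeds by (i) bounding the maximum clique size of $G$ by $\Oh(\sqrt{k})$ via $\binom{\ell}{2}\leq k$, (ii) invoking the result of Eppstein, L\"offler and Strash that a $d$-degenerate $n$-vertex graph has at most $n3^{d}$ maximal cliques, enumerable in time $\Oh(d\, n3^{d})$, and (iii) listing all subsets of each maximal clique. You instead run the classical degeneracy-ordering argument directly on arbitrary cliques: each clique is charged to its first vertex $u_i$ in the ordering and is a subset of $\{u_i\}\cup N^+(u_i)$ with $|N^+(u_i)|\leq\sqrt{2k}$, which immediately gives the $n\cdot 2^{\Oh(\sqrt{k})}$ count and a duplicate-free enumeration. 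You also handle the special vertex $v$ differently: the paper simply notes that adding one vertex raises the degeneracy by at most one and works with $G$ itself, whereas you split the cliques into those avoiding $v$ (cliques of $G-\{v\}$) and those containing $v$ (cliques of $G[N(v)]$ extended by $v$), both host graphs having at most $k$ edges. Your version avoids the external citation and the detour through maximal cliques, at the cost of a slightly longer case analysis; the paper's version is shorter on the page because it outsources the enumeration machinery. Both are valid, and the constants in the exponent are of the same order.
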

\begin{proof}
First we bound the size of a maximum clique in $G$ by $\Oh(\sqrt{k})$. Let $\ell$ be the size of a 
maximum clique in $G- \{v\}$. Since the number of edges in $G- \{v\}$ is at most $k$, we have that 
${\ell \choose 2} \leq k$. This implies that $\ell$ is bounded above by $\sqrt{8k}-1$. Since the size of a largest clique in 
$G$ is at most one more than the largest clique in $G- \{v\}$, we have that the size of a maximum clique in $G$ is bounded by $\sqrt{8k}$. It is well known that a graph $H$ on $k$ edges is $\sqrt{2k}$-degenerate (that is every subgraph of $H$ has a vertex of degree at most $\sqrt{2k}$). This implies, that $G$ is $\sqrt{2k}+1$ degenerate. Now, we know from~\cite{EppsteinLS10} that 
$G$ has at most $n3^{\sqrt{2k}+1}$ maximal cliques in $G$ and can be enumerated in time $\Oh(\sqrt{k}n3^{\sqrt{2k}})$. Since every clique in $G$ has size at most $\sqrt{8k}$, given a maximal clique $C$ of $G$, we can generate all the cliques contained in $C$ (by enumerating all subsets of $C$) in time proportional to $2^{\Oh(\sqrt{k})}$. This implies that the number of cliques in $G$ is upper bounded by $2^{\Oh(\sqrt{k})}n$ and it can be enumerated in time $2^{\Oh(\sqrt{k})}n$. \qed
\end{proof}

Next we show that {\sc \twooneedgepartization{} Compression} is in \FPT.

\begin{lemma}
\label{lem:21edgecompression}
{\sc \twooneedgepartization{} Compression} can be solved in time $2^{k+o(k)} \vert V(G)\vert ^{\Oh(1)}$.  
\end{lemma}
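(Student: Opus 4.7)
The plan is to identify the clique part of the hypothetical \twoonePartition{} of $G - S$ by enumeration, then use an \OCET{} subroutine for the remaining bipartization problem. First we compute an \ICpartition{} $(I_1, I_2, C)$ of $G[V] - S'$ in polynomial time, which exists because $G[V] - S'$ is a \twooneGraph{} and $(2,1)$-graphs are recognizable in polynomial time (see \cite{Brandstadt98thecomplexity,Feder03listpartitions}). The key structural observation is that for any hypothetical solution $S$ inducing a partition $(J_1, J_2, D)$ of $G - S$, the set $D$ is a clique in $G$ (since only edges are deleted), and hence $D \cap I_j$ is a clique in $G[I_j]$ for $j = 1, 2$. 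Since $I_j$ is independent in $G[V] - S'$, every edge of $G[V]$ with both endpoints in $I_j$ belongs to $S'$, so $|E(G[I_j])| \leq |S'| \leq k$. Consequently $|D \cap I_j| = \Oh(\sqrt{k})$, and Lemma~\ref{lem:small_clique} guarantees that $G[I_j]$ contains at most $2^{\Oh(\sqrt{k})} n$ cliques, enumerable in the same time.

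The algorithm iterates over all triples $(A_1, A_2, A_v)$ where $A_1$ is a clique in $G[I_1]$, $A_2$ is a clique in $G[I_2]$, and $A_v \in \{\emptyset, \{v\}\}$, giving $2^{\Oh(\sqrt{k})} n^2$ candidates. For each triple, set $A = A_1 \cup A_2 \cup A_v$; discard it unless $A$ is a clique in $G$, and otherwise extend $A$ to $D := A \cup \{u \in C : A \subseteq N_G(u)\}$. The set $D$ is a clique in $G$ because $A$ is a clique, $C$ is a clique, and every vertex of $D \cap C$ is adjacent to every vertex of $A$ by construction. We then invoke the $\Oh(2^k n^{\Oh(1)})$ algorithm of Guo et al.~\cite{GuoGHNW06} for \OCET{} on $G - D$ with budget $k$; if it returns an edge set $S$ with $|S| \leq k$, output $S$.

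For correctness, suppose $(G, S', k)$ admits a solution $S^*$ with partition $(J_1, J_2, D^*)$ of $G - S^*$. Let $A^* = (D^* \cap I_1) \cup (D^* \cap I_2) \cup (D^* \cap \{v\})$; this triple is enumerated, since $A^* \cap I_j$ is a clique in $G[I_j]$. The computed $D$ for this triple contains $D^*$, because every $u \in D^* \cap C$ is adjacent to all of $A^* \subseteq D^*$ (as $D^*$ is a clique in $G$), so $u \in D$. Edge-OCT size is monotone under vertex deletion, hence $\text{EOCT}(G - D) \leq \text{EOCT}(G - D^*) \leq |S^*| \leq k$, and \OCET{} succeeds on this candidate. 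The running time is $2^{\Oh(\sqrt{k})} n^2$ triples, each requiring polynomial-time cliqueness checks and one $\Oh(2^k n^{\Oh(1)})$ call to \OCET{}, giving a total of $2^{k + \Oh(\sqrt{k})} n^{\Oh(1)} = 2^{k + o(k)} n^{\Oh(1)}$. The main technical obstacle is bounding the number of clique candidates in $G[I_1]$ and $G[I_2]$, which we resolve by combining the edge sparsity $|E(G[I_j])| \leq k$ with Lemma~\ref{lem:small_clique}.
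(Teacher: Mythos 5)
Your proof is correct and follows essentially the same strategy as the paper's: guess the intersection of the clique part with $I_1\cup I_2\cup\{v\}$ via the $2^{\Oh(\sqrt{k})}n$ clique-enumeration bound of Lemma~\ref{lem:small_clique}, greedily extend it with all compatible vertices of $C$, and reduce to a single \OCET{} call with budget $k$. Your monotonicity argument ($D\supseteq D^*$ implies $\text{EOCT}(G-D)\leq\text{EOCT}(G-D^*)$) is a slightly more direct phrasing of the paper's replacement claim (Claim~\ref{claim21edge}), but the content is the same.
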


\begin{proof}
Let $(G,k,S')$ be the input instance and $|V(G)|=n$. If $G- S'$ is a \twooneGraph, then we return $S'$. 
Otherwise we do the following. 
Let $S$ be a {\em hypothetical solution} for the problem and let $(P_I,P_C)$ be an \ICpartition{} of $G- S$, 
which the algorithm suppose to compute.  Let $G'=G[V]- S'$.  
Since $G'$ is a \twooneGraph, the vertex set $V$ can be partitioned to $I_1$, $I_2$ and $C$ 
such that $G'[I_1]$ and $G'[I_2]$ are graphs with no edges, and $G'[C]$ is a complete graph. 
Since $G'=G[V]- S'$ and $I_1\subseteq V$ and $I_2\subseteq V$ are independent sets in $G'$, 
we have $E(G[I_1])\subseteq S'$ and $E(G[I_2])\subseteq S'$. 
Also, since $\vert S'\vert \leq k$, we have $\vert E(G[I_1])\vert\leq k$ and $\vert E(G[I_2])\vert \leq k$. 
Now consider the partition of the vertex set of $G$, $V\cup \{v\}$, into three parts $I_1\cup \{v\}$, $I_2$ and $C$. 
Recall that $(P_I,P_C)$ is an \ICpartition{} of our hypothetical solution $S$. 
Our algorithm guesses the sets of vertices $A=(I_1\cup \{v\})\cap P_C$ and $B=I_2\cap P_C$.
Since the partition $P_C$ should be a clique, $A\cup B$ is a clique. Thus, guessing 
the vertex sets $A$ and $B$ from $I_1\cup \{v\}$ and $I_2$ respectively is equal to guessing 
two cliques from $G[I_1\cup \{v\}]$ and $G[I_2]$ such that they together form a clique in $G$. 
By Lemma~\ref{lem:small_clique}, the number of cliques in $G[I_1\cup \{v\}]$ and $G[I_2]$ 
is bounded by $2^{\Oh(\sqrt{k})}n$ and these clique can be enumerated in time $2^{\Oh(\sqrt{k})}n$. 

After guessing $A$ and $B$, we know that in our hypothetical \ICpartition{} $(P_I,P_C)$, $A\cup B\subseteq P_C$ 
and $(I_1\cup I_2 \cup \{v\})- (A\cup B) \subseteq P_I$. Let 
$C'=\{u\in C~\vert ~ A\cup B \subseteq N(u)\}$. 
The following claim implies that we can set $P_C=A\cup B\cup C'$. 
\begin{claim}
\label{claim21edge}
 If there is a subset $S_1\subseteq E(G)$ and a partition $(P_I',P_C')$ of $V(G)$ such that 
 $(i)\; G- S_1$ is a \twooneGraph, $(ii)\;(P_I',P_C')$ is an \ICpartition{} of $G-
S_1$, $(iii)\, (I_1\cup I_2 \cup \{v\})- (A\cup B) \subseteq P_I'$ and $(iv)\; A\cup B\subseteq P_C'$, then 
 there is a subset $S_2\subseteq E(G)$ and a partition $(P_I'',P_C'')$ of $V(G)$ such that $(i)\; \vert S_2\vert \leq \vert S_1\vert $, $(ii)\; G- S_2$ is a \twooneGraph,
$(iii)\; (P_I'',P_C'')$ is an \ICpartition{}  of $G- S_2$,   
 $(iv)\;(I_1\cup I_2 \cup \{v\})- (A\cup B) \subseteq P_I''$ and $(v)\; A\cup B\cup C' = P_C''$.  
\end{claim}
\begin{proof}
We have given a set $S_1$ and an \ICpartition{} $(P_I',P_C')$ of $G- S_1$ with properties mentioned in the statement of the  claim. 
Since $(P_I',P_C')$ is an \ICpartition{} of $G- S_1$, $S_1$ is an edge OCT of $G[P_I']$. Since $(I_1\cup I_2 \cup \{v\})- (A\cup B) \subseteq P_I'$, 
we have that $P_C'- (A\cup B)\subseteq C$. Also since $P_C'$ is a clique and $A\cup B\subseteq P_C'$, we have that $P_C'- (A\cup B)\subseteq C'$. 
Now consider the partition $(P_I'',P_C'')$ of $V(G)$, where $P_C''=A\cup B \cup C'$ and $P_I''=V(G)- P_C''$. Note that 
$P_C''$ is a clique and $P_I''\subseteq P_I'$. This implies that the edge set $S_2=S_1 \cap E(G[P_I''])$ is 
an \OCET{} set of $G[P_I'']$. Hence the set $S_2$ and the partition $(P_I'',P_C'')$ are the required set and  the partition, respectively,  in the claim. \qed
\end{proof}
Claim~\ref{claim21edge} implies that for the correct guess of $A$ and $B$, we can set $P_C=A\cup B\cup C'$. 
This in turn implies that the problem is now reduced to delete as few edges as possible to make the graph $G- (A\cup B\cup C')$ bipartite and 
this is nothing but the \OCET{} problem on $(G- (A\cup B\cup C'),k)$. The problem $\OCET{}$ can be solved in time $2^k n^{\Oh(1)}$, where 
$n$ is the number of vertices in the input graph~\cite{GuoGHNW06}. Since there are  $2^{\Oh(\sqrt{k})}n^2$ choices for guessing $A$ and $B$, the total running time of the algorithm 
is bounded by $2^{k+o(k)} n^{\Oh(1)}$. \qed
\end{proof}

Thus by using  Lemma~\ref{lem:21edgecompression}, we can get the following theorem. 
\begin{theorem}
 \label{thm:21edge}
\twooneedgepartization{} can be solved in time $2^{k+o(k)} \vert V(G)\vert ^{\Oh(1)}$.  
\end{theorem}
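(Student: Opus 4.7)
The plan is to derive Theorem~\ref{thm:21edge} from Lemma~\ref{lem:21edgecompression} via the same iterative compression template already used in Section~\ref{sec:verdelabg}, adapted to the edge-deletion setting. First, I would fix an arbitrary ordering $v_1,\ldots,v_n$ of $V(G)$, define $V_i=\{v_1,\ldots,v_i\}$ and $G_i=G[V_i]$, and inductively maintain an edge set $\tilde S_i\subseteq E(G_i)$ of size at most $k$ such that $G_i-\tilde S_i$ is a \twooneGraph, starting from the trivially valid $\tilde S_0=\emptyset$.

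For the inductive step, given $\tilde S_i$, I would note that $\tilde S_i\subseteq E(G_{i+1}-v_{i+1})$ and $G_{i+1}[V_i]-\tilde S_i=G_i-\tilde S_i$ is a \twooneGraph, so the triple $(G_{i+1},k,\tilde S_i)$ is a valid input to {\sc \twooneedgepartization{} Compression}. Invoking the algorithm of Lemma~\ref{lem:21edgecompression}, if it returns an edge set, I adopt it as $\tilde S_{i+1}$ and continue; if it returns \NO, I would argue that $(G,k)$ is itself a \NO{} instance. The latter claim relies on the fact that the class of \twooneGraph{s} is closed under induced subgraphs: any hypothetical \twooneedge{} $S$ of $G$ with $|S|\leq k$ restricts to the edge set $S\cap E(G_{i+1})$ of size at most $k$ making $G_{i+1}-(S\cap E(G_{i+1}))$ an induced subgraph of the \twooneGraph{} $G-S$, contradicting the compression algorithm's output on $G_{i+1}$.

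After at most $n$ iterations the procedure either outputs $\tilde S_n$, a \twooneedge{} of $G$ of size at most $k$, or correctly declares \NO. Since Lemma~\ref{lem:21edgecompression} is called at most $n$ times and each call runs in $2^{k+o(k)}\vert V(G)\vert^{\Oh(1)}$ time, the aggregate running time remains $2^{k+o(k)}\vert V(G)\vert^{\Oh(1)}$, establishing the bound in the theorem. There is no substantive obstacle remaining once Lemma~\ref{lem:21edgecompression} is in hand; the only content is the standard correctness check and running-time accounting for iterative compression, which mirrors exactly the reduction from \twotwopartization{} to {\sc \twotwopartization{} Compression} already carried out in Section~\ref{sec:verdelabg}.
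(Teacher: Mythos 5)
Your proposal is correct and is exactly the route the paper takes: the paper derives Theorem~\ref{thm:21edge} from Lemma~\ref{lem:21edgecompression} by the same iterative compression template (run the compression algorithm at most $\vert V(G)\vert$ times, using closure of \twooneGraph{s} under induced subgraphs to justify the \NO{} case), which it only sketches by reference to Section~\ref{sec:verdelabg} while you spell it out explicitly.
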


\subsection{\onetwoedgepartization}
In this subsection we show that \onetwoedgepartization{} is in \FPT. Again we use the  
iterative compression technique to solve the problem. For our algorithm, we need  
an algorithm for a version of \OCT.
Let $\G$ be an hereditary graph class (hereditary means that if $G \in \G$, then every induced subgraph of $G$ is in $\G$ as well) 
and $\G$ is decidable. Then the problem {\sc $\G$-Weighted Bipartition} is defined as follows. 

\defparproblemoutput{\sc $\G$-Weighted Bipartition}{A graph $G$, $w:V(G)\rightarrow {\mathbb N}^+$ and integers $k$ and $W$}{$k+W$}
{An OCT $O$ of $G$, of size at most $k$ such that $w(O)\leq W$ and  $G[O]\in \G$}

Marx et al.~\cite{MarxOR13} showed that the unweighted version of the problem, named, {\sc $\G$-Bipartition} 
can be solved in \FPT{} time. The proof by Marx et al., constructs an ``equivalent graph'' with 
treewidth bounded by a function of $k$. The problem is then solved in the equivalent graph, using 
Courcelle's theorem~\cite{Courcelle90} by expressing the problem as an MSO predicate. Since we can express whether the weight of a subset of vertices is at most $W$
using an MSO predicate of length bounded by a function of $W$, the following theorem follows from the results of Marx et al.~\cite{MarxOR13}.  
\begin{theorem}
\label{thm:specialOCT}
If $\G$ is hereditary and decidable, then 
{\sc $\G$-Weighted Bipartition} is in \FPT.  
\end{theorem}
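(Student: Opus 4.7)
The plan is to lift the unweighted \FPT{} algorithm for $\G$-Bipartition of Marx et al.~\cite{MarxOR13} to the weighted setting by encoding the weights as vertex labels drawn from a bounded palette, and then expressing the weighted version as an MSO formula whose length depends only on $k$ and $W$.

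First I would preprocess the instance. Since every vertex weight is a positive integer, any feasible solution $O$ satisfies $|O|\leq w(O)\leq W$, and no vertex $v$ with $w(v)>W$ can ever lie in $O$. I would assign to each $v\in V(G)$ a color $c(v)\in\{1,2,\ldots,W,\star\}$, where $c(v)=w(v)$ whenever $w(v)\leq W$ and $c(v)=\star$ otherwise (marking $v$ as forbidden from $O$). The palette has size $W+1$, a function of the parameter.

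Next I would invoke the treewidth reduction of Marx et al.~\cite{MarxOR13} on this colored graph with parameter $k$. This yields an equivalent labeled graph $G^{\ast}$ of treewidth at most $f(k)$ in which the set of candidate OCTs of size at most $k$ (together with the induced subgraphs they produce) is preserved; the construction is structural (essentially a torso-type replacement of irrelevant pieces) and carries vertex colors through unchanged. Then I would construct an MSO formula $\varphi(O)$ asserting: (i) $|O|\leq k$, encoded by $\neg\exists v_1,\ldots,v_{k+1}$ pairwise distinct in $O$; (ii) no vertex with $c(v)=\star$ lies in $O$; (iii) $G^{\ast}-O$ admits a proper $2$-coloring; (iv) $w(O)\leq W$, expressed as a finite disjunction over all tuples $(n_1,\ldots,n_W)$ of nonnegative integers with $\sum_{i=1}^{W} i\cdot n_i\leq W$, each disjunct asserting that $O$ contains exactly $n_i$ color-$i$ vertices; and (v) $G^{\ast}[O]\in\G$, expressed as a finite disjunction over the family $\G_{\leq k}=\{H\in\G:|V(H)|\leq k\}$, which is computable from $k$ because $\G$ is decidable, each disjunct asserting that $G^{\ast}[O]$ is isomorphic to a particular $H\in\G_{\leq k}$. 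All five clauses have length bounded by some computable $g(k,W)$.

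Finally I would invoke Courcelle's theorem~\cite{Courcelle90} to decide $\varphi$ on $G^{\ast}$ in time $h(g(k,W),f(k))\cdot |V(G^{\ast})|$, yielding the claimed \FPT{} algorithm parameterized by $k+W$. The main obstacle is verifying that the Marx--O'Sullivan--Razgon reduction faithfully transports the weight colors along with the structural equivalence, so that the weight predicate evaluated on $G^{\ast}$ truly represents the weight predicate on $G$. This should be routine because their construction preserves the identities of the ``important'' vertices and only replaces structurally irrelevant subgraphs, but it is the one ingredient that must be checked carefully; everything else reduces to a standard bounded-treewidth MSO evaluation.
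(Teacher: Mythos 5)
Your proposal is correct and follows essentially the same route as the paper: both reduce to the treewidth-reduction-plus-Courcelle framework of Marx et al.\ and observe that the weight bound $w(O)\leq W$ can be folded into an MSO predicate of length bounded by a function of the parameter. In fact your write-up (coloring vertices by capped weights and expressing the budget as a finite disjunction over weight distributions) supplies details the paper only asserts, including the one genuine point to verify — that the treewidth reduction transports the weight labels — which the paper glosses over as well.
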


Now we are ready to define compression version of the problem  \onetwoedgepartization{} and 
prove that it is in \FPT, which in turn will imply that non-compression version of the problem is in \FPT.

\defparproblemoutput{{\sc \onetwoedgepartization{} Compression}}{A Graph $G$ with $V(G)=V\cup\{v\}$, an integer $k$ and an edge set $S'\subseteq E(G-v)$, of size at most $k$,
such that $G[V]- S'$ is a \onetwoGraph}{$k$}{A subset $S \subseteq E$  of size at most $k$ such that $G- S$ is a \onetwoGraph?} 


\begin{lemma}
\label{lem:12edgecompression}
{\sc \onetwoedgepartization{} Compression} is in \FPT.   
\end{lemma}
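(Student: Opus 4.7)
The plan is to combine iterative compression with the $\G$-Weighted Bipartition tool (Theorem~\ref{thm:specialOCT}). First, compute in polynomial time a \onetwoPartition{} $(I,C_1,C_2)$ of $G[V]- S'$. Since $S'$ only deletes edges from $G[V]$, both $C_1$ and $C_2$ are already cliques in $G$, whereas $E(G[I])\subseteq S'$ forces $|E(G[I])|\le k$.

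Let $(A_I,A_1,A_2)$ be the \onetwoPartition{} of a hypothetical target $G-S$, $|S|\le k$. Two structural observations drive the branching. (i) For $j=1,2$, the set $A_I\cap C_j$ is a clique in $G$ (as a subset of $C_j$), so its $\binom{|A_I\cap C_j|}{2}$ edges all lie in $S$, forcing $|A_I\cap C_j|=\Oh(\sqrt{k})$ and hence $|A_I\cap(C_1\cup C_2)|=\Oh(\sqrt{k})$. (ii) Each $K_j:=A_j\cap(I\cup\{v\})$ is a clique in $G[I\cup\{v\}]$; since $|E(G[I\cup\{v\}]-\{v\})|\le k$, Lemma~\ref{lem:small_clique} enumerates all such cliques in time $2^{\Oh(\sqrt{k})}n$. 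I branch on the placement of $v$ (three options) and on all $2^{\Oh(\sqrt{k})}n^2$ ordered pairs $(K_1,K_2)$ of disjoint cliques of $G[I\cup\{v\}]$.

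Fixing such a branch, let $I^\star:=(I\cup\{v\})\setminus(K_1\cup K_2)$. It remains to split $C_1\cup C_2$ into $(A_I^C,A_1^C,A_2^C)$ so that $K_j\cup A_j^C$ is a clique in $G$ for $j=1,2$ and $|E(G[I^\star\cup A_I^C])|\le k$. Decomposing the latter budget as $|E(G[I^\star])|+|E_G(I^\star,A_I^C)|+|E(G[A_I^C])|$, the first summand is fixed, and I further branch on an integer $b\le k$ playing the role of $|E_G(I^\star,A_I^C)|$, setting $a:=k-|E(G[I^\star])|-b$ as the budget on $|E(G[A_I^C])|$ (skipping branches with $a<0$). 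The clique requirement imposes a list $L(u)\subseteq\{A_1,A_2\}$ for each $u\in C_1\cup C_2$ (namely $A_j\in L(u)$ iff $u$ is $G$-adjacent to all of $K_j$), and forces each $A_j^C$ to be an independent set of the bipartite graph $H^\star:=\overline{G}[C_1\cup C_2]$. Build $\tilde H$ from $H^\star$ by attaching two anchor vertices $m_1,m_2$ joined by an edge, together with $um_{3-j}$ whenever $L(u)=\{A_j\}$ (edges to both anchors when $L(u)=\emptyset$); proper $2$-colourings of $\tilde H-O$ then realise exactly the valid list-respecting assignments of $(C_1\cup C_2)\setminus O$ to $A_1^C\cup A_2^C$. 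Set $w(u):=|N_G(u)\cap I^\star|$ for $u\in C_1\cup C_2$ (and $0$ on anchors), so $w(O)=|E_G(A_I^C,I^\star)|$. With $\G:=\{H:|E(H)|\le a\}$ (hereditary and decidable), apply Theorem~\ref{thm:specialOCT} with OCT-size bound $\Oh(\sqrt{k})$ (justified by $|A_I^C|=\Oh(\sqrt{k})$) and weight bound $W=b$; this call is FPT in $k$, and the total across all outer branches remains FPT.

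The main obstacle is faithfully capturing the cost $|E(G[A_I^C])|$, which concerns $G$-edges and not $\tilde H$-edges, whereas the $\G$-predicate in Theorem~\ref{thm:specialOCT} is tested on $\tilde H[O]$. The plan to resolve this is to exploit that the proof of Theorem~\ref{thm:specialOCT} in~\cite{MarxOR13} reduces to Courcelle's theorem on an auxiliary graph of treewidth bounded in $k+W$; enriching that graph with constantly many vertex labels that remember membership in $C_1,C_2,I^\star$ and the anchor set, and using $|A_I^C|=\Oh(\sqrt{k})$ to restrict quantifier ranges, lets the condition ``$|E(G[A_I^C])|\le a$'' be expressed as an MSO predicate whose size is bounded in $k$. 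Carefully verifying this MSO encoding---rather than the outer branching or the structural bounds, which are straightforward---is the subtlest step.
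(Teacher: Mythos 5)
Your proposal follows the paper's proof in all essentials: iterative compression, the observation that $E(G[I])\subseteq S'$ so Lemma~\ref{lem:small_clique} lets you enumerate the candidate cliques $K_1,K_2$ inside $I\cup\{v\}$ in time $2^{\Oh(\sqrt{k})}n$, the $\Oh(\sqrt{k})$ bound on the clique-side vertices that migrate to the independent part, the three-way split of the edge budget, the weight function $w(u)=|N_G(u)\cap I^\star|$ for the crossing edges, and the invocation of Theorem~\ref{thm:specialOCT} with a bounded-edge hereditary class for the internal edges. The two places where you deviate are both in the direction of extra machinery. First, the paper does not use anchor vertices; it simply puts $K_1\cup K_2$ into the clique-side graph $P_C'=A\cup C_1\cup C_2\cup\{v\}$ and solves the weighted bipartition problem on $\overline{G}[P_C']$, which is sound even if the returned OCT happens to touch $A$ (any OCT meeting the weight and $\G$ bounds yields a valid edge set of size at most $k$). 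Your anchor gadget, as stated, has a small soundness hole: nothing stops the OCT solver from deleting $m_1$ or $m_2$ (you give them weight $0$, which moreover violates $w:V\to\mathbb{N}^+$), and once an anchor is gone the list constraints $L(u)$ are no longer enforced, so the returned partition need not make $K_j\cup A_j^C$ a clique; you would need to make the anchors undeletable (e.g.\ weight $W+1$ or duplication). Second, your worry about ``$G$-edges versus $\tilde H$-edges'' does not require opening up the MSO encoding of~\cite{MarxOR13}: since $|O|$ is forced to be $\Oh(\sqrt k)$ and complementation commutes with taking induced subgraphs, the class of graphs whose \emph{complement} has at most $a$ edges is itself hereditary and decidable, so Theorem~\ref{thm:specialOCT} applies to it as a black box — this is the (implicit, and admittedly under-explained) fix the paper relies on as well. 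With those two repairs your argument is correct and matches the paper's.
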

\begin{proof}
Let $(G,k,S')$ be the input instance and $|V(G)|=n$. If $G- S'$ is a \onetwoGraph, then we return $S'$ 
as the output.  Otherwise we do the following. 
Let $S$ be a {\em hypothetical solution} for the problem and let $(P_I,P_C)$ be an \ICpartition{} of $G- S$. 
Let $G'=G[V]- S'$. Since $G'$ is a \onetwoGraph, the vertex set $V$ can be partitioned to $I$, $C_1$ and $C_2$ 
such that $(i)$ $G'[I]$ is a graph with no edges, and $(ii)$ $G'[C_1]$ and $G'[C_2]$ are cliques. 
Since $G'=G[V]- S'$ and $I\subseteq V$ is independent sets in $G'$, we have that $E(G[I])\subseteq S'$. 
Also since $\vert S'\vert \leq k$, we have that $\vert E(G[I])\vert\leq k$.  
Now consider the partition of the vertex set of $G$, $V(G)$,  into three parts $I$, $C_1\cup \{v\}$ and $C_2$. 
Recall that $(P_I,P_C)$ is an \ICpartition{} of our hypothetical solution $S$. 
Now our algorithm guesses the set of vertices $A=I \cap P_C$. 
Since $P_C$ should be a complement of a bipartite graph, 
$A$ should also be a complement of a bipartite graph. Hence our algorithm guesses 
two cliques $K_1$ and $K_2$ from $G[I]$ and assumes that $A=K_1\cup K_2$ will be 
part of $P_C$. By Lemma~\ref{lem:small_clique}, we have that the number of cliques in $G[I]$  
is bounded by $2^{\Oh(\sqrt{k})}n$  and these cliques can be enumerated in time $2^{\Oh(\sqrt{k})}n$.  
After guessing $A$, we know that in our hypothetical \ICpartition{} $(P_I,P_C)$, $I- A \subseteq P_I$ and $A\subseteq P_C$.  
Now consider the partition $(P_I',P_C')$ of $V(G)$, where $P_I'=I- A$ and $P_C'=A\cup C_1 \cup C_2 \cup \{v\}$. 

Now to solve the problem it is enough to find out a subset $U\subseteq  C_1 \cup C_2 \cup \{v\}$ 
such that $U$ is an OCT of the complement graph of $G[P_C']$ and $\vert E(G[P_I'\cup U])\vert \leq k$. 
This can be encoded as a {\sc $\G$-Weighted Bipartition} 
problem.
Since $U\subseteq C_1\cup C_2 \cup \{v\}$ and $C_1$ and $C_2$ are cliques, 
the cardinality of the set $U$ will be bounded by $\Oh(\sqrt k)$.
The edges that contribute to $E(G[P_I'\cup U])$ are of three types--$(i)$ edges within $G[P_I']$, 
$(ii)$ edges in $G[U]$ and $(iii)$ edges between $U$ and $P_I'$ in $G$.  
Let $k_1=\vert E(G[P_I'])\vert$. To encode the edges between $U$ and $P_I'$ we introduce 
a weight function $w$ on $P_C'$. For each $u\in P_C'$, $w(u)=\vert N_G(u)\cap P_I'\vert$. 
Since we have fixed $P_I'$ to be a subset of $P_I$, we need to include the set 
of edges in $E(G[P_I'])$ (type $(i)$) in the solution of the problem. The rest of the edges in the solution come 
from type $(ii)$ or type $(iii)$. Let $k_1= \vert E(G[P_I'])\vert $, $k_2=E(G[U])$ and $k_3$ 
be the number edges between $U$ and $P_I'$. 
So $U$ is an OCT in the complement of $G[P_C']$, of weight at most $k_3$ and 
number of edges in $G[U]$ is bounded by $k_2$. 
Now our algorithm guesses the number 
of edges of type $(ii)$ to be $k_2$ and type $(iii)$ to be $k_3$. 
Let $\G_{k_2}$ be the class of graphs such that the number of edges in it is bounded by $k_2$. 
The class $\G_{k_2}$ is hereditary. To solve our problem it is 
enough to solve {\sc $\G_{k_2}$-Weighted Bipartition} on the complement of the graph $G[P_C']$ with weight function $w$. 
This completes the proof of the lemma. \qed
\end{proof}

Thus by using  Lemma~\ref{lem:12edgecompression}, we can get the following theorem. 
\begin{theorem}
 \label{thm:12edge}
\onetwoedgepartization{} is in \FPT.  
\end{theorem}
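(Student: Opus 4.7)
My plan is to follow exactly the iterative compression template already used in Section~\ref{sec:verdelabg} for \twotwopartization{} and in Subsection 5.1 for \twooneedgepartization. Given an instance $(G,k)$ of \onetwoedgepartization{} with $V(G)=\{v_1,\ldots,v_n\}$, I would define $V_i=\{v_1,\ldots,v_i\}$ and $G_i=G[V_i]$, and iterate through $i=k+3,\ldots,n$. For the base case $i=k+3$, the edge set $E(G_{k+3})$ trivially satisfies the property that $G_{k+3}$ minus these edges (an empty graph) is a \onetwoGraph, but more usefully one can take any spanning subgraph construction that yields a $(k+1)$-sized solution on the first few vertices. (A safer formulation: start at $i$ where $G_i$ already has an easy $(k+1)$-sized solution, e.g. take all edges from one vertex and note the remainder is split into small pieces trivially identifiable as a \onetwoGraph; alternatively just start at $i$ such that $\vert E(G_i)\vert\le k+1$.)

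At step $i$, suppose inductively that we have computed an edge set $S_{i-1}\subseteq E(G_{i-1})$ of size at most $k$ such that $G_{i-1}-S_{i-1}$ is a \onetwoGraph. Let $S'_i=S_{i-1}\cup\{e\in E(G_i): e\text{ is incident to }v_i\}$. If $\vert S'_i\vert>k+1$, I would not be able to proceed directly; in that case I instead take $S'_i=S_{i-1}\cup E_i$, where $E_i$ denotes the edges of $G_i$ incident to $v_i$, and pass $(G_i,k,S'_i)$ to the compression oracle after the standard trick of first solving on the ``old'' part and augmenting. More cleanly: the canonical iterative-compression argument ensures that $S_{i-1}\cup E_i$ works as a $(k+\deg_{G_i}(v_i))$-sized witness when $S_{i-1}$ was a $k$-witness for $G_{i-1}$; one replaces ``Lemma on compression'' with the ``Disjoint compression'' variant that accepts an arbitrary witness of size $\le 2k$ and still runs in \FPT\ time, because the proof of Lemma~\ref{lem:12edgecompression} depends on the set $I$ having at most $k$ edges of $G[I]$ in $S'$ --- exactly the inductive hypothesis after one removes $v_i$.

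Concretely, I would invoke Lemma~\ref{lem:12edgecompression} on the instance $(G_i,k,S_{i-1})$ where $S_{i-1}$ is a known solution of size at most $k$ for $G_i-\{v_i\}=G_{i-1}$. The compression lemma, phrased exactly as in the excerpt, accepts such input and either returns a solution $S_i$ of size at most $k$ for $G_i$ or certifies that none exists. In the former case I proceed to $i+1$; in the latter case $(G,k)$ is a \NO{} instance because any solution for $G$ restricted to $G_i$ is a solution for $(G_i,k)$. After $n$ iterations we obtain either a valid $S_n$ of size at most $k$ for $G$, or a certificate of no-ness.

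The running time is $n$ times the running time of the compression algorithm of Lemma~\ref{lem:12edgecompression}, which is \FPT\ in $k$ (it is bounded by $2^{\Oh(\sqrt{k})}n^{\Oh(1)}$ times the running time of {\sc $\G_{k_2}$-Weighted Bipartition} from Theorem~\ref{thm:specialOCT}, which is \FPT\ in $k_2+k_3\le k$). Thus the overall algorithm for \onetwoedgepartization{} runs in \FPT\ time, establishing Theorem~\ref{thm:12edge}. The main (and only) non-trivial ingredient is Lemma~\ref{lem:12edgecompression}; the iterative compression wrapper is completely standard and was already carried out in detail for the sibling problems in this paper.
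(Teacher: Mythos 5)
Your proposal is correct and follows exactly the route the paper takes: Theorem~\ref{thm:12edge} is obtained by wrapping Lemma~\ref{lem:12edgecompression} in the standard iterative compression loop, invoking the compression routine on $(G_i,k,S_{i-1})$ with $v=v_i$, which is precisely the input format the compression problem demands (a $\le k$-sized witness for $G_i-\{v_i\}$). The detour in your middle paragraph about augmenting to $S_{i-1}\cup E_i$ and needing a ``$2k$-witness'' variant is unnecessary --- as you yourself note, the compression problem as stated already accepts $S_{i-1}$ directly --- but this does not affect correctness.
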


\bibliographystyle{abbrv}
\bibliography{references,ref}

\newpage
\appendix
\section{Proof of Lemma~\ref{lemma:oct_strong}}
In this subsection we give an outline of proof of Lemma~\ref{lemma:oct_strong}. 
The results in \cite{ReedSV04} show that the \OCT{} problem is 
equivalent to many instances of minimum separator problem in an auxiliary graph of the input graph and this is the main idea used to show 
\OCT{} is in \FPT. 
Let $G$ be a graph and $X$ is an OCT of $G$.
Note that $G-X$ is a bipartite graph. 
Without loss of generality we may assume that $X$ is independent, otherwise we can subdivide the edge with in $X$ 
and still $X$ be an OCT of $G$. 
Let $S_1\uplus S_2$ be a bipartition of $G-X$.
The auxiliary graph $G'$ of $G$ is constructed in~\cite{ReedSV04} is as follows. 
the vertex set of $G$, $V(G') = V(G)\setminus X \cup \{x_1,x_2 | x\in X\}$ and $E(G-X)\subseteq E'$. Additionally, we
add edges between $x_1$ and neighbors of $x$ in $S_2$, and between $x_2$ and
neighbors of $x$ in $S_1$. 
Given $U\subseteq X$, a valid partition of $X'(U)$ is pair $(S,T)$ which satisfies the following properties. 
\begin{enumerate}
\item $S\uplus T = X'(U)$; 
\item For every $x\in U$, $|\{x_1,x_2\}\cap S|=|\{x_1,x_2\}\cap T|=1$;
\item For every $x\in X\setminus U$, $|\{x_1,x_2\}\cap S|=|\{x_1,x_2\}\cap T|=0$.
\end{enumerate}
Now the following lemma follows from the results of Reed et al.~\cite{ReedSV04} (for more details see Lemma 3.2 and Lemma 3.3 in~\cite{LokshtanovSS09})
\begin{lemma}
 \label{lemma:OCTcut}
 Let $G$ be a graph and $X$ is an OCT of $G$. Let $G'$ is the auxiliary graph constructed from $G$. 
 For any $Y\subseteq X$, $O$ is an OCT  of $G-Y$ 
 not containing $X$ 
if and only if there is a valid 
 partition $(S,T)$ of $X'(X\setminus Y)$ such that  $O$ is an $(S,T)$-vertex cut in $G'-X'(Y)$. 
\end{lemma}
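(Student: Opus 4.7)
My plan is to prove both implications of Lemma~\ref{lemma:OCTcut} by explicit 2-coloring arguments, exploiting the fact that the auxiliary graph $G'$ is itself bipartite. Since $X$ is independent, $x_1$ is adjacent only to $N(x)\cap S_2$, and $x_2$ only to $N(x)\cap S_1$, the sets $L = S_1\cup\{x_1 : x\in X\}$ and $R = S_2\cup\{x_2 : x\in X\}$ form a bipartition of $G'$; this bipartition is inherited by every induced subgraph of $G'$, in particular by $G'-X'(Y)-O$ for any candidate $O\subseteq V(G)\setminus X$. The lemma then amounts to translating between proper 2-colorings of $(G-Y)-O$ and proper 2-colorings of $G'-X'(Y)-O$ that send the two halves of each split pair $\{x_1,x_2\}$ into $S$ and $T$ respectively.

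For the forward direction, I would start from an OCT $O$ of $G-Y$ with $O\cap X=\emptyset$, fix a proper 2-coloring $\chi$ of $(G-Y)-O$, and define the partition $(S,T)$ of $X'(X\setminus Y)$ by placing $x_1\in S$ and $x_2\in T$ whenever $\chi(x)=0$, and the reverse when $\chi(x)=1$. By construction $(S,T)$ is a valid partition in the sense of the appendix. To show that $O$ is an $(S,T)$-vertex cut in $G'-X'(Y)$, I would assume for contradiction the existence of a path $P$ from some $s\in S$ to some $t\in T$ in $G'-X'(Y)-O$, and then unfold $P$ to a walk in $G-Y-O$ between two vertices of $X\setminus Y$ by replacing each occurrence of $x_i$ on $P$ by $x$. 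The parities forced by the bipartition $L/R$ of $G'$, together with the defining rule of $(S,T)$, would then contradict the consistency of $\chi$ on the endpoints.

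For the backward direction, I would start from a valid partition $(S,T)$ and an $O\subseteq V(G)\setminus X$ that is an $(S,T)$-cut in $G'-X'(Y)$, and use this to produce a 2-coloring of $(G-Y)-O$. Since the bipartite graph $H = G'-X'(Y)-O$ admits no $S$-to-$T$ path, each connected component of $H$ meets at most one of $S$ and $T$; so I can pick, component by component, the unique proper 2-coloring that assigns color $0$ to $S$-vertices and color $1$ to $T$-vertices (and an arbitrary one on components touching neither). The validity of $(S,T)$ places $x_1$ and $x_2$ on opposite sides of $L/R$ and in opposite parts of $(S,T)$, which forces the chosen colorings of their (distinct) components to give $x_1$ and $x_2$ the same color. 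Assigning $x$ that common color, and each $y\in V((G-Y)-O)\setminus X$ its component color, yields a candidate 2-coloring $\chi$ of $(G-Y)-O$.

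The main obstacle will be verifying that this component-wise definition of $\chi$ is actually proper on every edge of $(G-Y)-O$. Intra-$(V(G)\setminus X)$ edges pose no problem since they appear verbatim in $G'$, and their endpoints sit in the same component of $H$ on opposite sides of $L/R$. The delicate edges are those of the form $\{x,y\}$ with $x\in X\setminus Y$ and $y\notin X$: such an edge corresponds in $G'$ to the edge $\{x_i,y\}$ for the unique $i\in\{1,2\}$ with $y\in N(x)\cap S_{3-i}$, and bipartiteness forces $x_i$ and $y$ to receive opposite colors in the 2-coloring of their component. By the definition of $\chi(x)$ as the color of $x_i$, one gets $\chi(x)\neq\chi(y)$, which closes the argument. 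The rest is routine bookkeeping on $S_1/S_2$-membership of internal vertices, which is exactly the content of Lemmas~3.2 and~3.3 of~\cite{LokshtanovSS09}.
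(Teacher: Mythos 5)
The paper itself does not really prove this lemma---it is delegated to Reed et al.\ and to Lemmas~3.2--3.3 of the Lokshtanov--Saurabh--Sikdar reinterpretation---so your attempt at a direct proof is the natural thing to do, and your forward direction is correct: unfolding an $S$--$T$ path of $G'-X'(Y)-O$ into a walk of $G-Y-O$ and comparing the parity forced by the bipartition $(L,R)$ of $G'$ with the parity forced by the $2$-coloring $\chi$ is exactly the standard argument.

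The backward direction, however, has a genuine gap in the definition of the coloring. First, ``the unique proper $2$-coloring that assigns color $0$ to $S$-vertices'' of a component need not exist: a single component of $H=G'-X'(Y)-O$ may contain two $S$-vertices lying on \emph{opposite} sides of $(L,R)$, and nothing in the definition of a valid partition or of an $(S,T)$-cut forbids this. Concretely, let $G$ be the path $x\,u\,w\,x'$ with $x,x'\in X$, $u\in S_2$, $w\in S_1$, and take the valid partition $S=\{x_1,x'_2\}$, $T=\{x_2,x'_1\}$ with $Y=\emptyset$, $O=\emptyset$. Then $O$ is an $(S,T)$-cut (the components of $G'$ are $x_1\,u\,w\,x'_2$, $\{x_2\}$, $\{x'_1\}$), but the odd path $x_1\,u\,w\,x'_2$ admits no proper $2$-coloring giving both of its $S$-vertices color $0$. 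Second, even where your rule is defined it assigns $x_1\in S$ color $0$ and $x_2\in T$ color $1$, i.e.\ \emph{different} colors; this contradicts your own subsequent claim that $x_1$ and $x_2$ receive a common color, and without that common color $\chi(x)$ is undefined and the check on edges $\{x,y\}$ with $y\in N(x)\cap S_1$ collapses. The repair is to base the coloring on the global bipartition $(L,R)$ rather than on $S/T$-membership: color $H$ by $L\mapsto 0$, $R\mapsto 1$, and flip the colors on exactly those components that meet $T$, leaving components meeting $S$ or neither unflipped (well defined since no component meets both). Then for each $x\in X\setminus Y$ the $S$-copy is unflipped and the $T$-copy is flipped while the two copies lie on opposite sides of $(L,R)$, so $x_1$ and $x_2$ do get equal colors; with this modification the rest of your edge-by-edge verification goes through.
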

The following {\em cut covering lemma} is proved in~\cite{DBLP:conf/focs/KratschW12}. 
\begin{lemma}
 \label{lemma:cutcovering}
Let $G$ be a graph, and $X\subseteq V(G)$ a set of terminals. We can identify,
in randomized polynomial time, a set $Z$ of 
$O(|X|^3)$ vertices such that for any $S, T, R \subseteq X$, a minimum $(S,T)$-vertex cut in $G-R$ is contained in $Z$.
 \end{lemma}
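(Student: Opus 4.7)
The plan is to follow the matroid-based approach of Kratsch and Wahlström. The central idea is that the collection of minimum vertex cuts between subsets of a terminal set $X$ can be encoded inside a suitable gammoid on $V(G) \setminus X$, and representative families of gammoids can be computed in randomized polynomial time via matrix truncation. The output $Z$ will then be the union of the supports of these representative families.

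First, I would transform $G$ into a directed graph $D$ by the standard vertex-splitting trick: replace each $v \in V(G)$ by two copies $v^-, v^+$ joined by an internal arc, and replace each undirected edge $\{u, v\}$ with the two directed arcs $u^+ \to v^-$ and $v^+ \to u^-$. Under this reduction, vertex cuts in $G$ correspond to arc cuts in $D$ realized only by internal arcs. Next, I would build a gammoid $M$ on ground set $V(G) \setminus X$ whose independent sets are those subsets linkable to the split terminals $\{x^-, x^+ : x \in X\}$ via vertex-disjoint paths in $D$. By Lindström's representation theorem, $M$ admits a matrix representation over a field of polynomial size, obtained by plugging random field elements into a symbolic Cauchy-type matrix; the Schwartz-Zippel lemma guarantees correctness with high probability.

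Then, for each rank $k \leq |X|$, I would invoke the algorithmic representative-set theorem on the family of independent sets of rank $k$ with parameter $q = \Oh(|X|)$, obtaining a subfamily of size $\binom{k+q}{k}$. The set $Z$ is defined as the union of the elements occurring in any set of any of these representative families; summing the binomial bounds over $k \leq |X|$ yields $|Z| = \Oh(|X|^3)$, and the entire computation runs in randomized polynomial time.

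The main obstacle is proving that $Z$ indeed covers every relevant minimum cut. For fixed $S, T, R \subseteq X$, one must argue through Menger's theorem and the exchange property of the gammoid that any minimum $(S,T)$-vertex cut $C$ in $G - R$ corresponds to a specific matroidal witness (the set of vertices saturating a maximum linkage from $S$ to $T$ in $G-R$), and that the $q$-representative family property then forces a cut of identical size that lies entirely within $Z$. Tying the LP-duality between cuts and linkages to representative-set preservation across all choices of $S$, $T$, and $R$ is the technical heart of the Kratsch-Wahlström cut covering lemma, and is where the polynomial size bound and derandomization-resistant randomness enter.
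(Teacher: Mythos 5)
First, note that the paper does not prove this lemma at all: it is imported verbatim from Kratsch and Wahlstr\"{o}m~\cite{DBLP:conf/focs/KratschW12}, so your attempt is being measured against their argument rather than anything in this paper. Your high-level ingredients (vertex splitting, a gammoid with the terminals as sources, Lindstr\"{o}m/Schwartz--Zippel for a randomized representation, and representative families) are the right ones, but the way you assemble them has a genuine gap in two places. The first is quantitative: you propose to compute, for each rank $k\le |X|$, a $q$-representative family of \emph{all} independent sets of size $k$ with $q=\Oh(|X|)$, and claim that summing $\binom{k+q}{k}$ over $k$ gives $\Oh(|X|^3)$. It does not: for $k=\Theta(|X|)$ and $q=\Theta(|X|)$ the bound $\binom{k+q}{k}$ is $2^{\Theta(|X|)}$, so your $Z$ is exponentially large. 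The actual proof applies the representative-family theorem to a family of \emph{constant-size} sets --- one triple $A_v=\{v^{(1)},v^{(2)},v^{(3)}\}$ per vertex $v$, living in a direct sum of three gammoids over $X$ (on the graph, on the reversed graph, and once more to account for the rest of the cut) --- so that the representative family has size $\binom{q+3}{3}=\Oh(|X|^3)$ with $q=\Oh(|X|)$, and $Z$ is the union of the surviving triples. The cubic bound is an artifact of the sets having size exactly three; it cannot be recovered from representative families of rank-$k$ independent sets.

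The second gap is the correctness argument, which you explicitly defer (``the technical heart \dots is where \dots enter''). Independence in a single gammoid with source set $X$ means linkability from $X$; it does not characterize being (part of) a minimum $(S,T)$-cut in $G-R$, so replacing an independent set by a representative one does not yield another \emph{cut}, let alone a minimum one. What is actually needed, and what you do not supply, is (i) the characterization that $v$ lies in some minimum $(S,T)$-vertex cut $C$ of $G-R$ if and only if $A_v\cup B$ is independent in the three-fold matroid, where $B$ is an explicit set of size $\Oh(|X|)$ built from $S$, $T$, $R$ \emph{and} $C\setminus\{v\}$ (this is where Menger's theorem is used, once per copy), and (ii) an exchange lemma showing that if $A_u\cup B$ is also independent then $(C\setminus\{v\})\cup\{u\}$ is again a minimum $(S,T)$-cut in $G-R$. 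One then takes a minimum cut $C$ minimizing $|C\setminus Z|$ and swaps its vertices into $Z$ one at a time. Without (i) and (ii) the claim ``the representative family forces a cut of identical size inside $Z$'' is an assertion, not a proof; and since $B$ must depend on the cut $C$ itself, the naive ``whole-cut-as-witness'' framing you sketch cannot be repaired by LP duality alone.
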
 
Now the proof of Lemma~\ref{lemma:oct_strong} follows directly from Lemma~\ref{lemma:OCTcut} and Lemma~\ref{lemma:cutcovering}.


\end{document}